\def\P#1{\mathrm{\bf P}\left\{#1\right\}}	
\def\E#1{\mathrm{\bf E}\left[#1\right]}		
\newtheorem{theorem}{Theorem}
\newtheorem{definition}{Definition}
\title{Scalable Load Balancing in the Presence of Heterogeneous Servers}
\author{Kristen Gardner$^{1}$, Jazeem Abdul Jaleel$^{2}$, Alexander Wickeham$^{2}$, Sherwin Doroudi$^2$ \vspace{0.2in} \\ 
$^{1}$ Amherst College \\
$^{2}$ University of Minnesota}
\date{June 24, 2020}
\begin{document}

\maketitle

\begin{abstract}
Heterogeneity is becoming increasingly ubiquitous in modern large-scale computer systems.  Developing good load balancing policies for systems whose resources have varying speeds is crucial in achieving low response times.  Indeed, how best to dispatch jobs to servers is a classical and well-studied problem in the queueing literature.
Yet the bulk of existing work on large-scale systems assumes homogeneous servers; unfortunately, policies that perform well in the homogeneous setting can cause unacceptably poor performance---or even instability---in heterogeneous systems.

We adapt the ``power-of-$d$'' versions of both the Join-the-Idle-Queue and Join-the-Shortest-Queue policies to design two corresponding families of heterogeneity-aware dispatching policies, each of which is parameterized by a pair of routing probabilities.
Unlike their heterogeneity-unaware counterparts, our policies use server speed information both when choosing which servers to query and when probabilistically deciding where (among the queried servers) to dispatch jobs.
Both of our policy families are analytically tractable: our mean response time and queue length distribution analyses are exact as the number of servers approaches infinity, under standard assumptions.
Furthermore, our policy families achieve maximal stability and outperform well-known dispatching rules---including heterogeneity-aware policies such as Shortest-Expected-Delay---with respect to mean response time.

\end{abstract}

\section{Introduction}
\label{sec:intro}

In large-scale computer systems, deciding how to dispatch arriving jobs to servers is a primary factor affecting system performance.
Consequently, there is a wealth of literature on designing, analyzing, and evaluating the performance of load balancing policies.
For analytical tractability, most existing work on dispatching in large-scale systems makes a key assumption: that the servers are homogeneous, meaning that they all have the same speeds, capabilities, and available resources.
But this assumption is not accurate in practice.
Modern computer systems are instead heterogeneous: server farms may consist of multiple generations of hardware, servers with varied resources, or even virtual machines running in a cloud environment.
Given the ubiquity of heterogeneity in today's systems, it is critically important to develop load balancing policies that perform well in heterogeneous environments.
In this paper, we focus on systems in which \emph{server speeds} are heterogeneous.


The dominant dispatching paradigm in the contemporary literature on large scale systems is the ``power of $d$ choices,'' wherein the dispatcher cannot use global information to make dispatching decisions, as that would require prohibitively expensive computation upon each job's arrival.  Rather, a fixed number ($d$) of servers are queried at random, and a dispatching decision is made among these servers.  Unfortunately, the ``power of $d$'' policies that have been designed to perform well in homogeneous systems can lead to unacceptably poor performance---or even instability---in the presence of heterogeneity. 
For example, the classical Join-the-Shortest-Queue-$d$ (JSQ-$d$) policy, under which, upon a job's arrival, the dispatcher queries $d$ servers uniformly at random and sends the job to the queried server with the fewest jobs in its queue, can cause the system to become unstable if the system's capacity is concentrated among a relatively small number of fast servers.
JSQ-$d$ is just one example of a \emph{heterogeneity-unaware} policy, but recent work has shown that other heterogeneity-unaware policies, including Join Idle Queue (JIQ), also can lead to poor performance in heterogeneous systems.
Clearly, it is necessary to use server speed information when making dispatching decisions in heterogeneous systems.

Yet simply using heterogeneity information is not enough: it matters exactly when and how the dispatcher uses this information.
Consider the Shortest-Expected-Delay-$d$ (SED-$d$) policy, a natural heterogeneity-aware generalization of JSQ-$d$.
Under SED-$d$, upon a job's arrival the dispatcher queries $d$ servers uniformly at random and sends the job to the queried server at which the job's expected delay---the number of jobs in the queue scaled by the server's speed---is smallest.
By allowing the dispatcher to select a fast server with a longer queue over a slow server with a shorter queue, SED-$d$ overcomes one of the weaknesses of JSQ-$d$ in the presence of heterogeneity.
Unfortunately, this is insufficient to solve the fundamental problem faced by JSQ-$d$. SED-$d$, too, can cause poor performance and instability if fast servers are queried infrequently.

While server heterogeneity poses a problem for many existing dispatching policies, it also presents an opportunity to design new policies that leverage heterogeneity to achieve good performance and maintain stability, rather than suffering in the presence of heterogeneity.
Our key insight is that there are two decision points at which ``power of $d$'' policies can use server speed information.
First, the dispatcher can make heterogeneity-aware decisions about which $d$ servers to query.
Second, the dispatcher can make heterogeneity-aware decisions about where among the queried servers to send an arriving job.
Alone, neither decision point appears to be enough to both ensure stability and achieve good performance.  In combination, they allow for the design of a new class of powerful policies that benefit from server speed heterogeneity, thereby resolving the problems of instability and poor performance.

We propose two new families of policies, called JIQ-($d_F$,$d_S$) and JSQ-($d_F$,$d_S$), that are inspired by classical ``power of $d$'' policies but use server speed information at both decision points.
This enables them to significantly outperform JSQ-$d$, SED-$d$, and other heterogeneity-aware policies, as well as to maintain the full stability region.
At the first decision point, instead of quering $d$ servers uniformly at random from among all servers, our policies query $d_F$ fast servers and $d_S$ slow servers.
Unlike under JSQ-$d$ and SED-$d$, this guarantees that each job has the option to run on a fast server.
After querying $d_F + d_S$ servers, our policies decide probabilistically based on the servers' states (idle or busy) whether to dispatch the job to a fast server or a slow server.
Our policy families are analytically tractable: given the probabilistic parameter settings, we derive the mean response time and queue length distribution under each.
While the two families are functionally similar, they require different analytical approaches.
We analyze JIQ-($d_F$,$d_s$) using a mean field approach, and JSQ-($d_F$,$d_S$) using a system of differential equations capturing the system evolution.
Our analyses of both policies are exact in the limiting regime where the number of servers approaches infinity, under standard asymptotic independence assumptions.

The remainder of this paper is organized as follows. In Section~\ref{sec:relatedwork} we survey related work on dispatching in heterogeneous systems. Section~\ref{sec:model} describes the system model and defines the JIQ-($d_F,d_S$) and JSQ-($d_F,d_S$) policy families. In Section~\ref{sec:analysis} we present our analyses of both policies.
We give a numerical evaluation in Section~\ref{sec:results} and propose a heuristic for selecting policy parameters in Section~\ref{sec:heuristic}.
Finally, in Section~\ref{sec:conclusion}, we conclude.

\section{Related Work}
\label{sec:relatedwork}

In large-scale homogeneous systems, Join-the-Shortest-Queue (JSQ) is known to minimize mean response time under first-come-first-served (FCFS) scheduling when service times are independent and identically distributed and have non-decreasing hazard rate~\cite{winston1977optimality,weber1978optimal}.
While analyzing response time is challenging due to the dependencies among queue lengths, approximations exist in both the FCFS setting with exponential service times~\cite{nelson1989approximation} and the Processor Sharing (PS) setting with general service times~\cite{gupta2007analysis}.
Because of the high communication cost required to query all servers for their queue lengths, the JSQ-$d$ (also called SQ($d$) or Power-of-$d$) policy was proposed and analyzed, assuming homogeneous servers and exponential service times~\cite{mitzenmacher2001power,vvedenskaya1996queueing}.
Other policies, such as Join-Idle-Queue (JIQ), have also been proposed as low-communication alternatives to JSQ~\cite{lu2011join,wang2018distributed}.

Once the server homogeneity assumption is relaxed, the optimality and analytical tractability of state-aware dispatching policies suffers.
The SQ(2) policy has been studied in heterogeneous FCFS systems with general service times, under both light traffic~\cite{izagirre2014light} and heavy traffic~\cite{zhou2017designing} assumptions.
Performance analysis also exists for SQ(2) in heterogeneous PS systems~\cite{mukhopadhyay2016analysis}.
The Shortest Expected Delay (SED) policy is a natural alternative to JSQ when server speeds are known; SED has been shown empirically to perform favorably to several other heterogeneity-aware policies~\cite{banawan1992comparative}.
However, SED is known to be suboptimal in general~\cite{whitt1986deciding}.
When service times are generally distributed, SED requires knowledge of the full job size distribution in order to estimate the remaining service time of the job currently in service.
The Generalized JSQ (GJSQ) policy has been proposed as an alternative when only the mean job size at each server, not the full job size distribution, is known~\cite{selen2016approximate} (note that when service times are exponentially distributed, SED and GJSQ are equivalent).
The equilibrium distribution of the number of jobs in the system has been analyzed under both SED and GJSQ in a heterogeneous two-server system~\cite{selen2016approximate,selen2016steady}.
The Balanced Routing policy (which we call Weighted JSQ in Section~\ref{sec:results}) uses server speed information by querying servers probabilistically in proportion to their speeds but ignores heterogeneity information when choosing among the queried servers; this policy minimizes the system workload in heavy traffic~\cite{chen2012asymptotic}, but can be suboptimal at lower load.

A common theme in much of the recent work on dispatching in heterogeneous systems is the observation that policies like SQ($d$) and JIQ, which were designed for homogeneous systems, have a reduced stability region when used in heterogeneous systems.
Consequently, much of the recent work in heterogeneous systems has focused on developing policies that maximize the stability region.
Recently several families of throughput optimal policies have been proposed, including PULL~\cite{stolyar2015pull} and $\Pi$~\cite{zhou2017designing}.
PULL, which is similar to JIQ, is shown to be optimal in the sense that it stochastically minimizes the queue length distribution~\cite{stolyar2015pull}; as we will see in Section~\ref{sec:results}, this does not mean that it is optimal with respect to other system metrics such as response time.

Another related stream of work focuses on the so-called ``slow server problem,'' wherein the system designer must choose when to use a slow server if at all.  Typically, models consist of two servers of different speeds with all jobs arriving to a single queue \cite{larsen81,lin84,rubinovitch85,rubinovitch85_stall, koole-scl-1995}, with more recent work examining similar problems in settings with more than two servers \cite{luh2002,rykov09}.  As they examine a central queue setting rather than an immediate dispatching setting, the policies and analysis proposed in these papers are inapplicable to our setting.  Closer to our setting but still within the literature on central queues is \cite{shenker1989}, which considers dispatching to one of two subsystems: a central queue for a limited number of fast servers, and a subsystem with an infinite number of slow servers.  

More closely related to our work is a literature stream on dispatching in small-scale heterogeneous systems \cite{tantawi1985optimal, bonomi-tc-1990, banawan1989load, FENG2005, Sethuraman:1999}.  Such work explores policies that use information about all servers' queue lengths (or sometimes more detailed information, as in \cite{esa2013}) when making dispatching decisions. These are not ``power of $d$'' policies and would not typically be considered scalable; hence, our policies of interest, analytical approaches, and qualitative findings differ significantly from those in the papers above.

\section{Model}
\label{sec:model}

\begin{figure}
	\centering
	\includegraphics[scale=0.4]{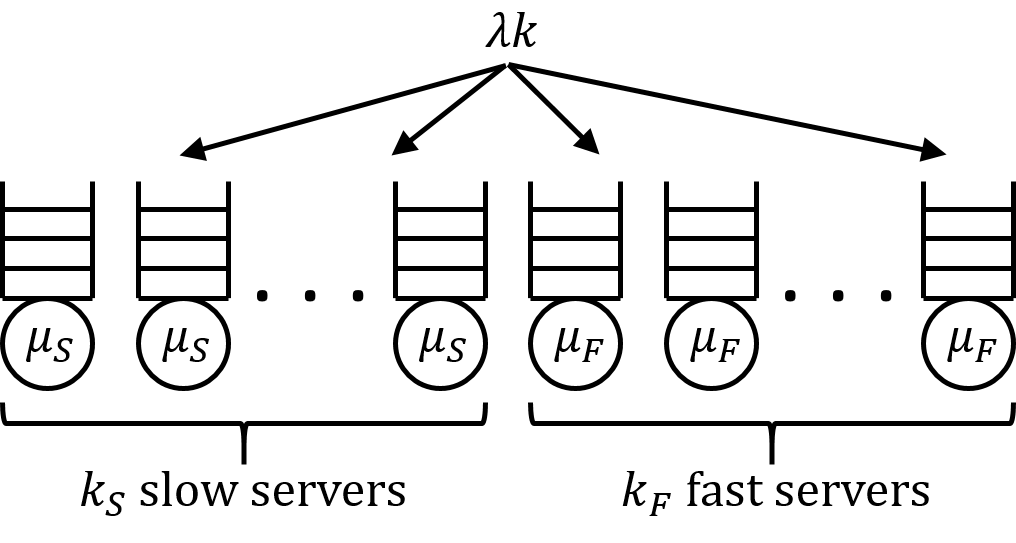}
	\caption{The system consists of $k_F$ fast servers, each with service rate $\mu_F$, and $k_S$ slow servers, each with service rate $\mu_S$. Job arrive to the system as a Poisson process with rate $\lambda k$ and are dispatched immediately.}
	\label{fig:model}
\end{figure}

Our system consists of $k$ heterogeneous servers (see Figure~\ref{fig:model}).
There are two classes of servers: $k_F$ of the servers are ``fast'' servers and $k_S = k - k_F$ of the servers are ``slow'' servers.
We let $q_F = \frac{k_F}{k}$ and $q_S = \frac{k_S}{k} = 1 - q_F$ denote the fraction of servers that are fast and slow respectively.
Service times are independent and for most of the paper we assume that they are exponentially distributed with rate $\mu_F$ on fast servers and rate $\mu_S$ on slow servers, where the speed ratio $r\equiv\mu_F/\mu_S>1$.
In Section~\ref{sec:general_service_times} we consider general service time distributions.
For simplicity, we assume that $\mu_F q_F + \mu_S q_S = 1$, so that the system has total capacity $k$.

Jobs arrive to the system as a Poisson process with rate $\lambda k$.
Upon arrival to the system, a job is dispatched immediately to a single server according to some policy.
Each server works on the jobs in its queue in first-come first-served (FCFS) order without preemption.

We consider two dispatching policies: JIQ-($d_F,d_S$) and JSQ-($d_F,d_S$).
The common framework shared by both policies favors idle fast servers whenever possible, and leverages the idea that slow servers are still occasionally worth utilizing (motivating probabilistic decision-making), and it is better to utilize them when idle rather than busy (motivating the use of two---rather than just one---probabilistic parameters).

\begin{definition}
	Under both \textbf{JIQ-($d_F,d_S$)} and \textbf{JSQ-($d_F,d_S$)}, when a job arrives the dispatcher queries $d_F$ fast servers and $d_S$ slow servers, chosen uniformly at random without replacement. The job is then dispatched to one of the queried servers as follows:
	\begin{itemize}[leftmargin=*]
		\item If any of the $d_F$ fast servers are idle, the job begins service on one of them.
		\item If all $d_F$ fast servers are busy and any of the $d_S$ slow servers are idle:
		\begin{itemize}
			\item With probability $p_S$ the job begins service on an idle slow server.
			\item With probability $1-p_S$ the job is dispatched to a \textbf{chosen} fast server among the $d_F$ queried.
		\end{itemize}
		\item If all $d_F + d_S$ queried servers are busy:
		\begin{itemize}
			\item With probability $p_F$ the job is dispatched to a \textbf{chosen} fast server among the $d_F$ queried.
			\item With probability $1-p_F$ the job is dispatched to a \textbf{chosen} slow server among the $d_S$ queried.
		\end{itemize}
	\end{itemize}
	The difference between the two policies lies in how a busy server (among those under consideration) is \textbf{chosen}.
	Under JIQ-($d_F$,$d_S$) the server is chosen uniformly at random. 
	Under JSQ-($d_F$,$d_S$) the server with the shortest queue is chosen. Under both policies all ties are broken uniformly at random.  
\end{definition}

\section{Analysis}
\label{sec:analysis}

In this section we analyze the queue length distribution and mean response time under both JIQ-$(d_F,d_S$) and JSQ-($d_F,d_S$).
Let $\rho_F$ and $\rho_S$ denote respectively the fraction of time that a fast server is busy and that a slow server is busy.
We begin with the observation that $\rho_F$ and $\rho_S$ are independent of the choice of policy between JIQ-($d_F,d_S$) and JSQ-($d_F,d_S$) and of the service time distribution.
For both policies, and for any service time distribution such that the system is stable, we have
\begin{align}
\label{eq:rhof}
\rho_F = \lambda k \P{\mathrm{job ~runs ~on ~a ~fast ~server}} \cdot \frac{1}{\mu_F k_F} 
&= \frac{\lambda}{\mu_F q_F} \left( (1-\rho_F^{d_F}) + \rho_F^{d_F}(1- \rho_S^{d_S})(1-p_S) + \rho_F^{d_F} \rho_S^{d_S} p_F \right) \\
\rho_S = \lambda k \P{\mathrm{job ~runs ~on ~a ~slow ~server}} \cdot \frac{1}{\mu_S k_S} 
&= \frac{\lambda}{\mu_S q_S} \left( \rho_F^{d_F} (1-\rho_S^{d_S})p_S + \rho_F^{d_F} \rho_S^{d_S}(1-p_F) \right).
\label{eq:rhos}
\end{align}
Solving this system of equations, numerically if an exact analytical solution is not possible, yields $\rho_F$ and $\rho_S$.
We will define $\pi_{0F} = 1 - \rho_F$  (respectively, $\pi_{0S} = 1 - \rho_S$) to be the probability that a fast (slow) server is idle.

We will assume that $k \rightarrow \infty$ and that in this limiting regime the queue lengths at each of the servers become independent.
This lets us treat a single queue as its own isolated system.
While we do not formally prove this asymptotic independence, our numerical results indicate that as $k$ becomes large our approximation is highly accurate.

\subsection{JIQ-($d_F$,$d_S$)}
\label{sec:jiq-analysis}

We will derive performance metrics under JIQ-($d_F,d_S$) first for exponential service times, then for general service times.
For both analyses, we use a mean field approach and study a tagged fast server and a tagged slow server, each in isolation. 
We will need the arrival rates to fast and slow servers when they are busy and when they are idle; we note that these rates are independent of the service time distribution.
Let $\lambda_{BF}$, $\lambda_{IF}$, $\lambda_{BS}$, and $\lambda_{IS}$ denote respectively the arrival rates to a tagged busy fast, idle fast, busy slow, and idle slow server.

Let $\lambda_{QF}$ denote the arrival rate of jobs that query a tagged fast server.
We have
\begin{equation}
\label{eq:lambdaQF}
\lambda_{QF} = \lambda k \frac{\binom{k_F - 1}{d_F - 1}\binom{k_S}{d_S}}{\binom{k_F}{d_F}\binom{k_S}{d_S}} = \frac{\lambda d_F}{q_F}.
\end{equation}
$\lambda_{QS}$ is defined similarly.  The arrival rates $\lambda_{IF}$ and $\lambda_{BF}$ depend not only on the state of the tagged fast server, but also on whether the \emph{other} servers queried by an arriving job are busy or idle.
Under our asymptotic independence assumption, all other fast (respectively, slow) servers have the same stationary distribution, $\pi_{iF}$ ($\pi_{iS}$), as the tagged fast (slow) server, where $\pi_{iF}$ ($\pi_{iS}$) denotes the stationary probability that there are $i$ jobs at a tagged fast (slow) server, $i\in\{0,1,\ldots\}$.
When the tagged fast server is idle, an arriving job that queries the tagged server will be dispatched to it if it is chosen (uniformly at random) among all idle fast servers queried by the arrival.
We have:
\begin{equation}
\label{eq:lambdaIF}
\lambda_{IF} = \lambda_{QF} \left( \sum_{i=0}^{d_F-1} \binom{d_F-1}{i} \frac{\pi_{0F}^{i} (1 - \pi_{0F})^{d_F-1-i}}{i+1}  \right).
\end{equation}
When the tagged fast server is busy, an arriving job that queries the tagged server will be dispatched to it if none of the other queried fast servers are idle (probability $(1-\pi_{0F})^{d_F-1}$), and if either (1) the arrival queries an idle slow server (probability $1-\left(1-\pi_{0S}\right)^{d_S}$), the dispatcher chooses to send the job to a fast server (probability $1-p_S$), and the tagged fast server is chosen uniformly at random among all queried fast servers (probability $1/d_F$), or (2) all queried slow servers are busy (probability $\left(1-\pi_{0S}\right)^{d_S}$), the dispatcher chooses to send the job to a fast server (probability $p_F$), and the tagged fast server is chosen uniformly at random among all queried fast servers (probability $1/d_F$).
We thus have:
{\small
	\begin{equation}
	\label{eq:lambdaBF}
	\lambda_{BF} = \lambda_{QF} \frac{(1-\pi_{0F})^{d_F-1}}{d_F}  \left( \left( 1 - (1 - \pi_{0S})^{d_S} \right) (1-p_S) + (1 - \pi_{0S})^{d_S} p_F  \right). 
	\end{equation}
}

Our approach for the tagged slow server is similar, yielding:
\begin{align}
\label{eq:lambdaIS}
\lambda_{IS} &= \lambda_{QS} (1-\pi_{0F})^{d_F} \left( \sum_{i=0}^{d_S-1} \binom{d_S - 1}{i} \frac{\pi_{0S}^{i}(1 - \pi_{0S})^{d_S-1-i}}{i+1} p_S \right) \\
\label{eq:lambdaBS}
\lambda_{BS} &= \lambda_{QS} (1-\pi_{0F})^{d_F} \frac{(1 - \pi_{0S})^{d_S-1}}{d_S} (1-p_F).
\end{align}

We are now ready to derive mean response time under both exponential and general service times.

\subsubsection{Exponential service times}

Our approach involves setting up and solving a Markov chain for a tagged fast server and for a tagged slow server.
We begin with the fast server.
Recall that state $iF$ denotes that there are $i$ jobs at the fast server, including the job in service if there is one, and $\pi_{iF}$ denotes that state's stationary probability. 
The number of jobs at the tagged fast server will evolve as a state-dependent M/M/1 queue with arrival rate $\lambda_{IF}$ when it is idle, arrival rate $\lambda_{BF}$ when it is busy, and service rate $\mu_F$.
Figure~\ref{fig:tagged_server} depicts the Markov chain corresponding to this server.
\begin{figure}[t]
	\centering
	\begin{tikzpicture}[start chain = going right,
	-Triangle, every loop/.append style = {-Triangle}]
	\foreach \i in {0,...,3} 
	\node[state, on chain]  (\i) {\i F};
	\node[draw = none, on chain]  (dots) {$\dots$};
	\draw 
	(0) edge[bend left, auto = left] node {$\lambda_{IF}$} (1)
	(1) edge[bend left, auto = left] node {$\mu_F$} (0)
	(3) edge[bend left, auto = left, pos = 0.45] node {$\lambda_{BF}$} (dots)
	(dots) edge[bend left, auto = left, pos = 0.55] node {$\mu_F$} (3)
	;
	\foreach \i in {1,...,2} {
		\draw let \n1 = { int(\i+1) } in
		(\i)  edge[bend left, auto = left] node {$\lambda_{BF}$} (\n1)
		(\n1) edge[bend left, auto = left] node {$\mu_F$} (\i);
	}
	\end{tikzpicture} 
	\caption{The Markov chain tracking the number of jobs at a tagged fast server.  State $iF$ indicates that there are $i$ jobs at the fast server (including the job in service, if there is one).}
	\label{fig:tagged_server}
\end{figure}
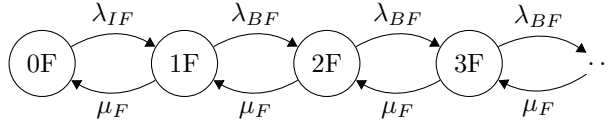

The stationary probabilities for this Markov chain are:
\begin{equation*}
\label{eq:balance_eqn_fast}
\pi_{iF} = \frac{\lambda_{IF}}{\mu_F} \left( \frac{\lambda_{BF}}{\mu_F} \right)^{i-1} \pi_{0F}, \qquad i \geq 1.
\end{equation*}
With the normalization equation, $\sum_{i=0}^{\infty} \pi_{iF} = 1$, this yields:
\begin{equation}
\label{eq:pi0F}
\pi_{0F} = \frac{\mu_F - \lambda_{BF}}{\mu_F - \lambda_{BF} + \lambda_{IF}}.
\end{equation}

Our approach for the slow server is similar, yielding:
\begin{align}
\pi_{iS} &= \frac{\lambda_{IS}}{\mu_S} \left( \frac{\lambda_{BS}}{\mu_S} \right)^{i-1} \pi_{0S}, \qquad i \geq 1. \nonumber \\
\pi_{0S} &= \frac{\mu_S - \lambda_{BS}}{\mu_S - \lambda_{BS} + \lambda_{IS}} \label{eq:pi0S} 
\end{align}

We now have six equations (\ref{eq:lambdaIF},\ref{eq:lambdaBF},\ref{eq:lambdaIS},\ref{eq:lambdaBS},\ref{eq:pi0F},\ref{eq:pi0S}) to solve for six unknown variables ($\pi_{0F}$, $\lambda_{IF}$, $\lambda_{BF}$, $\pi_{0S}$, $\lambda_{IS}$, $\lambda_{BS}$), after which we will have obtained the full queue length distribution under JIQ-($d_F,d_S$).

We are now ready to give an expression for mean response time as a function of the system parameters and the policy parameters $p_F$ and $p_S$.
Let $\E{N_F}$ and $\E{N_S}$ denote respectively the mean number of jobs at a fast server and at a slow server.
We have:
\begin{align}
\E{N_F} &= \sum_{i=0}^{\infty} i \pi_{iF} = \pi_{0F} \frac{\lambda_{IF}}{\mu_F} \sum_{i=1}^{\infty} i \left( \frac{\lambda_{BF}}{\mu_F} \right)^{i-1} \nonumber \\
&= \frac{\lambda_{IF}\mu_F}{(\mu_F - \lambda_{BF})(\mu_F - \lambda_{BF} + \lambda_{IF})} \\
\E{N_S} &= \frac{\lambda_{IS}\mu_S}{(\mu_S - \lambda_{BS})(\mu_S - \lambda_{BS} + \lambda_{IS})}.
\end{align}
Putting this together, the mean number of jobs in the system is:
\begin{equation}
\E{N} = k_F \E{N_F} + k_S \E{N_S}.
\end{equation}
Finally, we apply Little's Law to obtain the mean response time:
\begin{align}
\label{eq:etjiq}
\E{T} &= \frac{k_F \E{N_F} + k_S \E{N_S}}{\lambda k} \nonumber \\
&= \frac{q_F \lambda_{IF}\mu_F}{\lambda (\mu_F - \lambda_{BF})(\mu_F - \lambda_{BF} + \lambda_{IF})} + \frac{q_S \lambda_{IS}\mu_S}{\lambda (\mu_S - \lambda_{BS})(\mu_S - \lambda_{BS} + \lambda_{IS})}.
\end{align}

\subsubsection{General service times}
\label{sec:general_service_times}

For general service times, our Markov chain approach no longer applies.
Now, a job's service time on a fast server (respectively, a slow server) is distributed like $Y_F$ ($Y_S$), where $ Y_S \sim r Y_F$.  Note that the servers exhibit heterogeneity in speed, but (as in the case of exponential service times) the coefficient of variation associated with service times is the same across both server speeds.

To analyze this system, we make the observation that the dynamics of a busy fast server are identical to those of an M/G/1 system with arrival rate $\lambda_{BF}$ and service time distributed like $Y_F$.
The only difference between these two systems is that they have different arrival rates when idle; this does not affect the response time distribution.
Hence we can conclude that the response time distribution at a fast server under JIQ-($d_F,d_S$) is the same as that of this  M/G/1.
A similar result holds for slow servers.
The Pollaczek-Khinchine formula gives us:
\begin{align*}
\E{T_F} &= \frac{\lambda_{BF} \E{Y_F^2} }{2(1 - \lambda_{BF} \E{Y_F})} + \E{Y_F} \\
\E{T_S} &= \frac{\lambda_{BS} \E{Y_S^2} }{2(1 - \lambda_{BS} \E{Y_S})} + \E{Y_S}.
\end{align*}

Conditioning on whether an arriving job is dispatched to a fast or a slow server, we then obtain the system mean response time:
\begin{align}
\label{eq:etjiq_gen}
\E{T} &= \frac{q_F \lambda_F}{\lambda} \left( \frac{\lambda_{BF} \E{Y_F^2} }{2(1 - \lambda_{BF} \E{Y_F})} + \E{Y_F} \right) + \frac{q_S \lambda_S}{\lambda} \left( \frac{\lambda_{BS} \E{Y_S^2} }{2(1 - \lambda_{BS} \E{Y_S})} + \E{Y_S} \right),
\end{align}
which coincides with (\ref{eq:etjiq}) when $Y_F$ and $Y_S$ are exponentially distributed.

The observation that a tagged fast server essentially behaves like an M/G/1 also allows us to adapt standard techniques, such as M/G/1 transform analysis, to derive queue length distributions and other system metrics (see Chapter 26 of \cite{Mor-2013}).

\subsubsection{Optimization}
\label{sec:jiq-opt}

Having determined $\E{T}$ for a fixed $p_F$ and $p_S$, we can now optimize the JIQ-($d_F$,$d_S$) policy by finding the optimal values for $p_F$ and $p_S$.
We will assume a fixed $d_F$ and $d_S$, but note that we could also optimize over $d_F$ and $d_S$; only a small set of values for $d_F$ and $d_S$ are likely to be practical.

Equation (\ref{eq:etjiq_gen}) tells us that mean response time is linear in the second moments of $Y_F$ and $Y_S$.
This means that, because $Y_F$ and $Y_S$ have the same coefficient of variation, the optimal values of $p_F$ and $p_S$ depend only on the mean service times $\E{Y_F} = 1/\mu_F$ and $\E{Y_S} = 1/\mu_S$.
This insensitivity property allows us to assume exponential service times without loss of generality when carrying out our optimization.

Our optimization problem is as follows:
\begin{equation}
\begin{aligned}
\label{opt:jiq}
& \underset{p_F, p_S}{\text{minimize}} 
& & \E{T}\\
& \text{subject to} 
& & \mathrm{Equations} (\ref{eq:lambdaIF},\ref{eq:lambdaBF},\ref{eq:lambdaIS},\ref{eq:lambdaBS},\ref{eq:pi0F},\ref{eq:pi0S})\\
& & & 0 \leq \pi_{0F}, \pi_{0S} \leq 1 \\
& & & 0 \leq p_F, p_S \leq 1
\end{aligned}
\end{equation}
where $\E{T}$ is given in (\ref{eq:etjiq}).  We provide an explicit formulation of this problem in the Appendix.

\subsection{JSQ-($d_F$,$d_S$)}
\label{sec:jsq-analysis}

While the difference between JIQ-($d_F$,$d_S$) and JSQ-($d_F$,$d_S$) may seem like only a minor policy modification, it necessitates a fundamentally different analytical approach.
Imagine applying the tagged server approach used to analyze JIQ-($d_F$,$d_S$) to JSQ-($d_F$,$d_S$), and consider a tagged fast server under JSQ-($d_F$,$d_S$).
As under JIQ-($d_F$,$d_S$), this server experiences a state-dependent arrival rate.
Unlike under JIQ-($d_F$,$d_S$), this arrival rate is different for \emph{every} state, and it depends on the queue lengths of all other polled servers.
Hence adopting the Markov chain-based approach we used for JIQ-($d_F$,$d_S$) would require solving a highly complicated infinite system of equations.

Instead, our approach for analyzing JSQ-($d_F$,$d_S$) will involve considering a tagged \emph{arrival} to the system, again assuming that $k \rightarrow \infty$ and that in this limiting regime, all servers have independent queue lengths.

We condition on whether the tagged arrival runs on a fast or slow server and on whether or not it waits in the queue:
\begin{align}
\E{T} &= \E{T|\mathrm{run ~on ~idle ~fast}} \cdot \P{\mathrm{run ~on ~idle ~fast}} + \E{T|\mathrm{run ~on ~idle ~slow}} \cdot \P{\mathrm{run ~on ~idle ~slow}} \nonumber \\
&\quad + \E{T|\mathrm{queue ~at ~busy ~fast}} \cdot \P{\mathrm{queue ~at ~busy ~fast}} + \E{T|\mathrm{queue ~at ~busy ~slow}} \cdot \P{\mathrm{queue ~at ~busy ~slow}} \nonumber \\
&= \frac{1}{\mu_F} \cdot (1-\rho_F^{d_F}) + \frac{1}{\mu_S} \cdot \rho_F^{d_F}(1-\rho_S^{d_s})p_S + \E{T|\mathrm{queue ~at ~busy ~fast}} \cdot \rho_F^{d_F}(\rho_S^{d_S} p_F + (1-\rho_S^{d_s})(1-p_S)) \nonumber \\
&\quad + \E{T|\mathrm{queue ~at ~busy ~slow}} \cdot \rho_F^{d_F} \rho_S^{d_S}(1-p_F). \label{eq:et}
\end{align}
In line (\ref{eq:et}) we use the asymptotic independence assumption.

We next derive $\E{T|\mathrm{~queue ~at ~busy ~fast}}$.
Here, the job joins the shortest queue among the $d_F$ polled fast servers, all of which are busy.
In order to derive response time, we first need to determine the distribution of the number of jobs in a fast server's queue.

Let $n_i(t)$ denote the number of fast servers with \emph{at least} $i$ jobs at time $t$.
Let $f_i(t) = n_i(t)/k_F$ be the fraction of servers that are fast and have at least $i$ jobs at time $t$.
We note that $f_0(t) = 1$ for all $t$.

As in~\cite{mitzenmacher2001power}, we consider a limiting system, where $k \rightarrow \infty$ and the system exhibits deterministic steady-state behavior where $df_i(t)/dt=0$ for all $i\ge0$.  This setting lets us describe our system's evolution through a system of differential equations wherein all $f_i(t)$ functions are constant (henceforth we write $f_i$ rather than $f_i(t)$).

We formulate the differential equations by considering the expected change in the number of fast servers' queues with at least $i > 1$ jobs over a small interval of time $dt$.
This number will increase if an arriving job joins the queue at a fast server with exactly $i-1$ jobs.
The rate at which jobs arrive to the overall system is $\lambda k$; with probability $f_{i-1}^{d_F} - f_i^{d_F}$ all $d_F$ of the polled fast servers have at least $i-1$ jobs, but not all $d_F$ have at least $i$ jobs (that is, the shortest queue among the $d_F$ fast servers contains exactly $i$ jobs).
The arriving job will join the length-($i-1$) queue if either (1) there is an idle slow server among the $d_S$ polled slow servers (probability $1-\rho_S^{d_S}$) and the job is assigned to join the queue at a fast server (probability $1-p_S$), or (2) there are no idle slow servers among the $d_S$ polled slow servers (probability $\rho_S^{d_S}$) and the job is assigned to join the queue at a fast server (probability $p_F$).
The number of queues with at least $i > 1$ jobs will decrease if a job departs from a queue with exactly $i$ jobs.
This happens with rate $\mu_F k_F (f_i - f_{i+1})$.
Putting this together, we have, for $i > 1$:
\begin{equation*}
\label{eq:dni}
\frac{d n_i}{dt} = \lambda k \left(f_{i-1}^{d_F} - f_i^{d_F}\right) \left( (1 - \rho_S^{d_S})(1 - p_S) + \rho_S^{d_S} p_F \right) - \mu_F k_F (f_i - f_{i+1}).
\end{equation*}
The case where $i = 1$ is similar, except here an arriving job that finds a fast server with $i-1 = 0$ jobs in the queue will simply begin service on that server with probability 1.
So for $i = 1$ we have:
\begin{equation*}
\label{eq:dn1}
\frac{d n_1}{dt} = \lambda k \left(f_0^{d_F} - f_1^{d_F}\right) - \mu_F k_F (f_1 - f_2).
\end{equation*}

Dividing by $k_F$ gives us a system of equations for the $f_i$ terms:
\begin{align}
\frac{d f_i}{dt} &= \frac{\lambda}{q_F} \left(f_{i-1}^{d_F} - f_i^{d_F}\right) \left( (1 - \rho_S^{d_S})(1 - p_S) + \rho_S^{d_S} p_F \right) - \mu_F (f_i - f_{i+1}). \label{eq:dfi} \\
\frac{d f_1}{dt} &= \frac{\lambda}{q_F} \left(1 - f_1^{d_F}\right) - \mu_F (f_1 - f_2). \label{eq:df1}
\end{align}
recalling that $q_F = \frac{k_F}{k}$ is the fraction of servers that are fast and that $f_0 = 1$. 
We further note that $f_1$ is the fraction of servers that are busy; using our asymptotic independence assumption, we have $f_1 = \rho_F$.
We now set $\frac{d f_i}{dt} = 0$ for all $i$ and solve for the $f_i$ terms.

Once we have the $f_i$ terms, we can find $\E{T|\mathrm{queue ~at ~busy ~fast}}$ by conditioning on the queue length seen by an arriving job:
\begin{align}
\E{T\left|\substack{\mathrm{queue ~at} \\ \mathrm{busy ~fast}}\right.} &= \sum_{i=1}^{\infty} \P{\left. \substack{\mathrm{job ~joins ~queue} \\ \mathrm{with~} i \mathrm{~jobs}}\right| \substack{\mathrm{queue ~at} \\ \mathrm{busy ~fast}}} (i+1) \frac{1}{\mu_F} \nonumber \\
&= \frac{1}{\mu_F} \sum_{i=1}^{\infty} (i+1) \cdot \frac{f_i^{d_F} - f_{i+1}^{d_F}}{f_1^{d_F}}. \label{eq:etqf}
\end{align}
Note that the probability that a job joins a queue with $i$ jobs is \emph{not} the same as the probability that a server has $i$ jobs in its queue.

Our approach to find $\E{T|\mathrm{queue ~at ~busy ~slow}}$ is similar.
Let $s_i(t)$ denote the fraction of slow servers with at least $i$ jobs at time $t$ (we will write $s_i$ when the meaning is clear).
We obtain the following system of differential equations for the $s_i$ terms:
\begin{align}
\frac{d s_i}{dt} &= \frac{\lambda}{q_S} \left(s_{i-1}^{d_S} - s_i^{d_S}\right) \rho_F^{d_F} (1 - p_F) - \mu_S (s_i - s_{i+1}). \label{eq:dsi} \\
\frac{d s_1}{dt} &= \frac{\lambda}{q_S} \left(1 - s_1^{d_S}\right) \rho_F^{d_F} p_S - \mu_S (s_1 - s_2), \label{eq:ds1}
\end{align}
where we note that $s_0(t) = 1$ for all $t$.
Again, setting $\frac{d s_i}{dt} = 0$ for all $i$ allows us to solve for a fixed point for the $s_i$ terms.

As with the fast servers, we now find
\begin{align}
\E{T\left|\substack{\mathrm{queue ~at} \\ \mathrm{busy ~slow}}\right.} 
&= \frac{1}{\mu_S} \sum_{i=1}^{\infty} (i+1) \cdot \frac{s_i^{d_S} - s_{i+1}^{d_S}}{s_1^{d_S}}. \label{eq:etqs}
\end{align}

The overall system mean response time results from combining (\ref{eq:et}), (\ref{eq:etqf}), and (\ref{eq:etqs}).

\subsubsection{Optimization}
\label{sec:jsq-opt}

As under JIQ-($d_F$,$d_S$), we now find the values of $p_F$ and $p_S$ that minimize mean response time under JSQ-($d_F$,$d_S$) (assuming $d_F$ and $d_S$ are fixed).
Our optimization problem is as follows:
\begin{equation}
\begin{aligned}
\label{opt:jsq}
& \underset{p_F, p_S}{\text{minimize}} 
& & \E{T}\\
& \text{subject to} 
& & \mathrm{Equations} (\ref{eq:rhof},\ref{eq:rhos})\\
& & & \frac{df_i}{dt} = \frac{ds_i}{dt} = 0 & i \ge 1\\
& & & f_0 = s_0 = 1 \\
& & & f_1 = \rho_F \\
& & & s_1 = \rho_S \\
& & & 0 \leq \rho_F, \rho_S \leq 1 \\
& & & 0 \leq p_F, p_S \leq 1
\end{aligned}
\end{equation}
where $\E{T}$ is given in (\ref{eq:et}, \ref{eq:etqf}, \ref{eq:etqs}) and $\frac{df_i}{dt}$, $\frac{ds_i}{dt}$ are given in (\ref{eq:dfi},\ref{eq:df1}, \ref{eq:dsi}, \ref{eq:ds1}).  We provide an explicit formulation of this problem in the Appendix.

\subsection{Stability}

One of the significant downsides to heterogeneity-unaware dispatching policies such as JSQ-$d$ and SED-$d$ is that they can become unstable under certain system parameters, including, for example, when $q_F$ is low and the fast servers are significantly faster than the slow servers.
In Theorem~\ref{thm:stability}, we show that JIQ-($d_F,d_S$) and JSQ-($d_F,d_S$) do not suffer this downside: instead, our policies remain stable as long as $\lambda < 1$, thereby achieving the maximum possible stability region.

\begin{theorem}
	\label{thm:stability}
	Under both JIQ-($d_F$,$d_S$) and JSQ-($d_F$,$d_S$) with optimal choices of $p_F$ and $p_S$, the system is stable for $\lambda < \mu_F q_F + \mu_S q_S = 1$, for any values of $d_F,d_S \geq 1$.
\end{theorem}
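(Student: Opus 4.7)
My plan is to exhibit, for each $\lambda < 1$, a specific (not necessarily optimal) choice of $(p_F, p_S) \in [0,1]^2$ for which the system is stable. Since the optimal $(p_F, p_S)$ minimize $\E{T}$ and $\E{T}$ is finite only in the stable regime, this will imply that the optimal parameters also yield stability. Note that equations (\ref{eq:rhof})--(\ref{eq:rhos}) hold for both policies and for any service time distribution, so a single argument handles both JIQ-($d_F,d_S$) and JSQ-($d_F,d_S$).

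First I would sum (\ref{eq:rhof}) and (\ref{eq:rhos}) to obtain the conservation law $\rho_F \mu_F q_F + \rho_S \mu_S q_S = \lambda$, valid at any stable equilibrium. I would then fix a target $(\rho_F^\ast, \rho_S^\ast) \in (0,1)^2$ satisfying this law and invert the two equations to solve for the corresponding $(p_F, p_S)$. Setting $a = (\rho_F^\ast)^{d_F}$ and $b = (\rho_S^\ast)^{d_S}$, a short calculation shows that $(p_F, p_S) \in [0,1]^2$ realizing the target exist if and only if $\rho_S^\ast \mu_S q_S \leq \lambda\, a$, which, after using conservation to eliminate $\rho_S^\ast$, reduces to $h(\rho_F^\ast) \geq 1$, where $h(x) := x^{d_F} + x \mu_F q_F/\lambda$.

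Because $h$ is continuous and strictly increasing with $h(0) = 0$ and $h(1) = 1 + \mu_F q_F/\lambda > 1$, the condition $h(\rho_F^\ast) \geq 1$ holds for all $\rho_F^\ast$ in some subinterval $[\rho_0, 1)$ with $\rho_0 \in (0,1)$. Requiring additionally that $\rho_S^\ast \in (0,1)$ via conservation amounts to $\rho_F^\ast \in \bigl((\lambda - \mu_S q_S)^+/(\mu_F q_F),\; \min\{1,\; \lambda/(\mu_F q_F)\}\bigr)$, and each lower bound lies strictly below each upper bound precisely because $\lambda < 1 = \mu_F q_F + \mu_S q_S$ (and $\mu_S q_S > 0$). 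Thus the feasibility interval is nonempty, and any $\rho_F^\ast$ in the intersection of $[\rho_0, 1)$ with this interval yields a valid $(p_F, p_S) \in [0,1]^2$ realizing $\rho_F, \rho_S < 1$.

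The main obstacle is the last step: upgrading this fixed-point existence to genuine stability of the underlying stochastic system. Under the paper's asymptotic independence assumption ($k \to \infty$), the argument is nearly automatic — each tagged fast server under JIQ-($d_F,d_S$) behaves as a state-dependent M/M/1 (or M/G/1) with busy-state load strictly less than 1, hence has finite mean queue length; under JSQ-($d_F,d_S$), the fixed point of (\ref{eq:dfi})--(\ref{eq:ds1}) yields a geometric-type tail for the $f_i$ and $s_i$ sequences, so $\E{N_F}$ and $\E{N_S}$ and hence $\E{T}$ in (\ref{eq:et}) are finite. A fully rigorous finite-$k$ justification might couple the system to the throughput-optimal static weighted policy that routes each arrival to a fast server with probability $\mu_F q_F$ (under which each server is independently an M/M/1 of load $\lambda < 1$), but within the paper's mean-field framework the above suffices.
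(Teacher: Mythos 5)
Your overall logical skeleton matches the paper's: exhibit one stabilizing pair $(p_F,p_S)$ for each $\lambda<1$, then argue that optimizing over $(p_F,p_S)$ cannot shrink the stability region. But the route you take to exhibit that pair is different from the paper's, and it contains the gap you yourself flag at the end. The paper simply sets $p_S=1$ and $p_F=\mu_F q_F$ and bounds the arrival rate into an individual \emph{busy} server directly: with $p_S=1$, a job is sent to a busy fast server only if all queried fast servers and all queried slow servers are busy, so the rate into any given busy fast server is at most $\frac{\lambda}{q_F}p_F=\lambda\mu_F<\mu_F$, and symmetrically the rate into any busy slow server is at most $\frac{\lambda}{q_S}(1-p_F)=\lambda\mu_S<\mu_S$. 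This is a per-server drift/domination bound that the paper explicitly notes holds without taking $k\to\infty$, and it needs neither asymptotic independence nor any fixed-point analysis. JSQ-($d_F$,$d_S$) is then handled by observing that replacing the uniform choice among queried busy servers with the shortest-queue choice cannot change the stability region.

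Your route instead inverts the stationary flow-balance equations (\ref{eq:rhof})--(\ref{eq:rhos}) to find $(p_F,p_S)$ admitting a solution $(\rho_F^\ast,\rho_S^\ast)\in(0,1)^2$. The algebra is correct (the conservation law, the feasibility condition $h(\rho_F^\ast)\ge 1$, and the nonemptiness of the target interval all check out, modulo explicitly verifying that $[\rho_0,1)$ intersects the interval forced by $\rho_S^\ast\in(0,1)$, which it does since $h$ exceeds $1$ at the interval's right endpoint). The genuine problem is that this establishes only the existence of a consistent solution to equations that are themselves \emph{stationary identities}: $\rho_F$ and $\rho_S$ in (\ref{eq:rhof})--(\ref{eq:rhos}) are long-run busy fractions, so writing those equations down already presupposes the stability you are trying to prove. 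A sub-unit fixed point is a necessary consistency check, not a sufficient condition for positive recurrence, and the ``nearly automatic'' upgrade you invoke under asymptotic independence inherits the same circularity, since the tagged-server M/M/1 description is itself a consequence of the assumed stationary regime. The coupling/domination idea in your final sentence is the right instinct and is essentially what the paper's bounds on $\lambda_{BF}$ and $\lambda_{BS}$ accomplish; if you promote that to the main argument for the specific choice $p_S=1$, $p_F=\mu_F q_F$, the fixed-point machinery becomes unnecessary and the proof closes.
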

\begin{proof}
	We will begin by showing that the system is stable under JIQ-($d_F$,$d_S$) when $p_S = 1$ and $p_F = \mu_F q_F$, for all $d_F, d_S \geq 1$.
	The system's stability is affected by the arrival rates to busy fast servers and to busy slow servers.
	The arrival rate to an individual busy fast server, denoted $\lambda_{BF}$ (while we use the same notation as earlier in the section, note that here we do not assume that $k \rightarrow \infty$), is:
	\begin{align*}
	\lambda_{BF} &= \lambda k \frac{\binom{k_F - 1}{d_F - 1}}{\binom{k_F}{d_F}} \P{\substack{\mathrm{all ~other ~queried} \\ \mathrm{fast ~servers ~busy}}} \frac{1}{d_F} \cdot \left( \P{\substack{\mathrm{all ~queried} \\ \mathrm{slow ~servers ~busy}}} p_F + \P{\substack{\mathrm{not ~all ~queried} \\ \mathrm{slow ~servers ~busy}}} (1-p_S)  \right),
	\end{align*}
	which is at most $\lambda p_F/q_F$
	because $p_S = 1$, $\P{\mathrm{all ~other ~fast ~servers ~busy}}\leq 1$, and $\P{\mathrm{all ~queried ~slow ~servers ~busy}} \leq 1$.
	Let $p_F = \mu_F q_F$.
	Then we have $\lambda_{BF}\leq \frac{\lambda}{q_F} p_F = \lambda \mu_F<\mu_F$, ensuring the stability of the fast servers, if $\lambda < 1$.
	
	We also must consider the arrival rate to a busy slow server, denoted $\lambda_{BS}$.
	We have:
	\begin{align*}
	\lambda_{BS} &= \lambda k \frac{\binom{k_S - 1}{d_S - 1}}{\binom{k_S}{d_S}} \P{\substack{\mathrm{all ~other ~queried} \\ \mathrm{slow ~servers ~busy}}} \P{\substack{\mathrm{all ~queried} \\ \mathrm{fast ~servers ~busy}}} (1-p_F) \frac{1}{d_S},
	\end{align*}
	which is at most $\lambda (1-p_F)/q_S$ because $\P{\mathrm{all ~other ~slow ~servers ~busy}} \leq 1$ and $\P{\mathrm{all ~queried ~fast ~servers ~busy}} \leq 1$.
	Again, let $p_F = \mu_F q_F$. Then we have
	\begin{align*}
	\lambda_{BS} &\leq \frac{\lambda}{q_S} (1-p_F) = \frac{\lambda}{q_S} \mu_S q_S = \lambda \mu_S,
	\end{align*}
	which is less than $\mu_S$, ensuring the stability of the slow servers, if $\lambda < 1$.
	
	At this point we have shown that JIQ-($d_F$,$d_S$) is stable for $p_S = 1$, $p_F = \mu_F q_F$.
	We obtain the same stability result for JSQ-($d_F$,$d_S$) by observing that joining the shortest queue among $d_F$ fast servers (or among $d_S$ slow servers) instead of routing randomly to one of those $d_F$ fast servers ($d_S$ slow servers) cannot change the stability region.
	Finally, optimizing over all possible choices of $p_F$ and $p_S$ cannot decrease the stability region.
\end{proof}

Theorem~\ref{thm:stability} tells us that there always exist settings for $p_S$ and $p_F$ for which the system is stable; in Theorem~\ref{thm:stability_highlambda} we identify more specific necessary and sufficient conditions for stability as $\lambda \rightarrow 1$.

\begin{theorem}
	\label{thm:stability_highlambda}
	As $\lambda \rightarrow 1$, the system is unstable if $p_F \neq \mu_F q_F$, and the system is stable if $p_F = \mu_F q_F$ and $p_S \geq \mu_S q_S$.
\end{theorem}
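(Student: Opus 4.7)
The plan is to adapt the argument from the proof of Theorem~\ref{thm:stability}, which controlled the arrival rates $\lambda_{BF}$ and $\lambda_{BS}$ to busy fast and busy slow servers and required each to remain below the corresponding service rate. Because choosing the shortest queue among a set of busy servers cannot shrink the stability region relative to choosing uniformly at random, it suffices to treat JIQ-$(d_F,d_S)$; the JSQ-$(d_F,d_S)$ conclusion then follows just as in Theorem~\ref{thm:stability}.

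For the sufficiency half I would reuse the bounding technique from that earlier proof. Write the bracketed factor in $\lambda_{BF}$ as $q p_F + (1-q)(1-p_S)$, where $q = \P{\text{all queried slow servers busy}}$. The identity $q p_F + (1-q)(1-p_S) = p_F - (1-q)(p_F + p_S - 1)$ shows that this factor is at most $p_F$ precisely when $p_F + p_S \ge 1$; substituting $p_F = \mu_F q_F$ rewrites this condition as $p_S \ge \mu_S q_S$. Combined with the trivial bound $\P{\text{all other queried fast servers busy}} \le 1$, this yields $\lambda_{BF} \le \lambda p_F/q_F = \lambda \mu_F < \mu_F$ for every $\lambda < 1$. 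The companion bound $\lambda_{BS} \le \lambda(1-p_F)/q_S = \lambda \mu_S < \mu_S$ follows directly from the expression for $\lambda_{BS}$ without needing the hypothesis on $p_S$.

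For the necessity half I would pass to the limit in the fixed-point equations. Summing (\ref{eq:rhof}) and (\ref{eq:rhos}) yields the throughput identity $\mu_F q_F \rho_F + \mu_S q_S \rho_S = \lambda$, and since $\rho_F, \rho_S \le 1$ with $\mu_F q_F + \mu_S q_S = 1$, any sequence $\lambda_n \to 1$ for which a stable fixed point exists must satisfy $\rho_F(\lambda_n), \rho_S(\lambda_n) \to 1$. Substituting these limits into (\ref{eq:lambdaBF}) and (\ref{eq:lambdaBS}) gives $\lambda_{BF} \to p_F/q_F$ and $\lambda_{BS} \to (1-p_F)/q_S$. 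If $p_F > \mu_F q_F$, the first limit exceeds $\mu_F$, so $\lambda_{BF} > \mu_F$ for $\lambda$ close to $1$, violating fast-server stability; symmetrically, $p_F < \mu_F q_F$ forces $\lambda_{BS} > \mu_S$, violating slow-server stability. Hence $p_F = \mu_F q_F$ is necessary.

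The hard part is the necessity direction: the argument has to be phrased as a contrapositive, since it deduces instability from the non-existence of any valid mean field fixed point with $\rho_F, \rho_S < 1$, and the passage to the limit must be handled carefully to rule out pathological behavior along subsequences. The sufficiency direction, in contrast, reduces to a single clean inequality once the equivalence $p_F + p_S \ge 1 \Leftrightarrow p_S \ge \mu_S q_S$ (under $p_F = \mu_F q_F$) is identified.
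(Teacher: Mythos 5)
Your proposal is correct and follows essentially the same route as the paper's proof: bound the busy-server arrival rates $\lambda_{BF}\le\lambda\mu_F$ and $\lambda_{BS}\le\lambda\mu_S$ for sufficiency, and for necessity use the throughput/capacity argument to force $\rho_F,\rho_S\to 1$ and conclude $\lambda_{BF}\to p_F/q_F$, $\lambda_{BS}\to(1-p_F)/q_S$. Your algebraic identity for the bracketed factor and your derivation of $\rho_F,\rho_S\to 1$ by summing the fixed-point equations are just slightly more explicit packagings of the paper's term-by-term bound and its verbal maximum-throughput argument.
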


\begin{proof}
	We first show that the system is stable if $p_F = \mu_F q_F$ and $p_S \geq \mu_S q_S$.
	We begin by considering an arbitrary tagged fast server.
	Note that the arrival rate to the tagged server when it is idle does not affect the stability region of that server. The arrival rate to a tagged busy fast server is
	\begin{align}
	\label{lambdaBF}
	\lambda_{BF} &= \lambda k \frac{\binom{k_F - 1}{d_F - 1}}{\binom{k_F}{d_F}} \P{\substack{\mathrm{all ~other ~queried} \\ \mathrm{fast ~servers ~busy}}} \frac{1}{d_F} \cdot \left( \P{\substack{\mathrm{all ~queried} \\ \mathrm{slow ~servers ~busy}}} p_F + \P{\substack{\mathrm{not ~all ~queried} \\ \mathrm{slow ~servers ~busy}}} (1-p_S)  \right)
	\end{align}
	We have $\P{\mathrm{all ~other ~queried ~fast ~servers ~busy}} \leq 1$, $p_F = \mu_F q_F$, and $p_S \geq \mu_S q_S$, so $1 - p_S \leq 1 - \mu_S q_S = \mu_F q_F$. Applying these bounds to (\ref{lambdaBF}) we obtain
	\begin{align*}
	\lambda_{BF} &\leq \frac{\lambda}{q_F} \left( \P{\substack{\mathrm{all ~queried} \\ \mathrm{slow ~servers ~busy}}} \mu_F q_F + \P{\substack{\mathrm{not ~all ~queried} \\ \mathrm{slow ~servers ~busy}}} \mu_F q_F  \right) = \lambda \mu_F,
	\end{align*}
	which is less than $\mu_F$, ensuring the stability of the tagged server---and hence, of all fast servers---if $\lambda < 1$.

	We now establish the stability of the slow servers.
	Because the fast servers are stable as $\lambda \rightarrow 1$, it must also be the case that $\P{\mathrm{tagged ~fast ~server ~busy}} \rightarrow 1$. 
	Thus an arriving job is likely to query $d_F$ busy servers: $\P{\mathrm{all ~queried ~fast ~servers ~busy}} \rightarrow 1$.
	Let $\P{\mathrm{all ~queried ~fast ~servers ~busy}} = 1 - \epsilon$ for some small $\epsilon > 0$, where $\epsilon \rightarrow 0$ as $\lambda \rightarrow 1$.
	The total arrival rate to all slow servers is then $\lambda k (1-\epsilon)$. 
	Consider an arbitrary tagged slow server, and note that, as for the fast servers, the arrival rate to a slow server when it is idle does not affect its stability region. For a tagged busy slow server, we have
	\begin{align*}
	\lambda_{BS} = \frac{\lambda d_S}{q_S} (1-\epsilon) \cdot \P{\substack{\mathrm{all ~other ~queried} \\ \mathrm{~slow ~servers ~busy}}} \cdot (1-p_F) \cdot \frac{1}{d_S}.
	\end{align*}
	We have $\P{\mathrm{all ~other ~queried ~slow ~servers ~busy}} \leq 1$ and $p_F = \mu_F q_F$, so $1 - p_F = 1 - \mu_F q_F = \mu_S q_S$, which gives
	\begin{align*}
	\lambda_{BS} \leq \lambda \mu_S (1-\epsilon).
	\end{align*}
	This is less than $\mu_S$, ensuring stability of the tagged slow server---and hence, of all slow servers---if $\lambda < 1$.
	
	We now turn to the second part of the result: that the system is unstable when $p_F \neq \mu_F q_F$ (for any choice of $p_S$). 
	The argument hinges on the observation that the maximum throughput of the system is $k(\mu_F q_F + \mu_S q_S) = k$ (because $\mu_F q_F + \mu_S q_S = 1$).
	In order for the system to be stable as $\lambda \rightarrow 1$ and the total system arrival rate approaches $k$, it must therefore be the case that the probability that all servers are busy approaches 1; if some servers were idle with probability $\epsilon > 0$, then the maximum possible system throughput would be less than the arrival rate and the system would be unstable.
	
	With this observation in mind, we first consider the case where $p_F > \mu_F q_F$.
	Recall from Theorem~\ref{thm:stability} the arrival rate to an individual busy fast server:
	\begin{align*}
	\lambda_{BF} &= \lambda k \frac{\binom{k_F - 1}{d_F - 1}}{\binom{k_F}{d_F}} \P{\substack{\mathrm{all ~other ~queried} \\ \mathrm{fast ~servers ~busy}}} \frac{1}{d_F} \cdot \left( \P{\substack{\mathrm{all ~queried} \\ \mathrm{slow ~servers ~busy}}} p_F + \P{\substack{\mathrm{not ~all ~queried} \\ \mathrm{slow ~servers ~busy}}} (1-p_S)  \right) \\
	&= \frac{\lambda}{q_F} \P{\substack{\mathrm{all ~other ~queried} \\ \mathrm{fast ~servers ~busy}}}  \cdot \left( \P{\substack{\mathrm{all ~queried} \\ \mathrm{slow ~servers ~busy}}} p_F + \P{\substack{\mathrm{not ~all ~queried} \\ \mathrm{slow ~servers ~busy}}} (1-p_S)  \right).
	\end{align*}
	Assuming that $\P{\substack{\mathrm{all ~other ~queried} \\ \mathrm{fast ~servers ~busy}}} \rightarrow 1$ and $\P{\substack{\mathrm{all ~queried} \\ \mathrm{slow ~servers ~busy}}} \rightarrow 1$ (if not, the system already is unstable), as $\lambda \rightarrow 1$ we have that
	$\lambda_{BF} \rightarrow p_F/q_F,$
	which is less than $\mu_F$ if $p_F < \mu_F q_F$; this contradicts our assumption that $p_F > \mu_F q_F$, hence the system is unstable in this case.
	The case where $p_F < \mu_F q_F$ is similar.
\end{proof}

It is possible that the system also remains stable for a wider range of values for $p_S > 0$, but identifying the full stability region remains an open problem.

\section{Numerical Results}
\label{sec:results}

In this section we present a numerical study to evaluate performance under the JIQ-($d_F$,$d_S$) and JSQ-($d_F$,$d_S$) policy families.
For each set of system parameters considered, we report results for the optimal policy within each family, i.e., $p_F$ and $p_S$ are chosen to minimize mean response time, as discussed in Sections~\ref{sec:jiq-opt} and~\ref{sec:jsq-opt}.
We consider different levels of server heterogeneity by varying two parameters: $q_F$ (the fraction of servers that are fast) and $r\equiv\mu_F / \mu_S$ (the speed ratio).
Unless otherwise specified, we set $d_F = d_S = 2$.

\subsection{Convergence in $k$}

\begin{figure}
	\centering
	\includegraphics[scale=0.49]{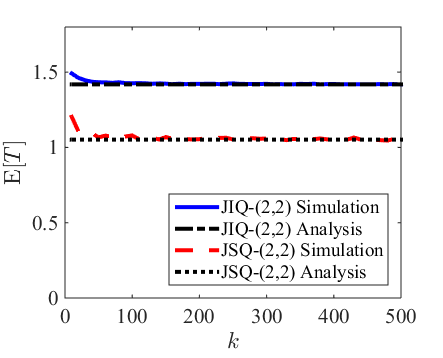}
	\caption{Analytical and simulated mean response time as a function of $k$ under both JIQ-($d_F$,$d_S$) and JSQ-($d_F$,$d_S$). Here $q_F=0.5$, $r=10$, $d_F=d_S=2$, and $p_F$ and $p_S$ are optimized separately for each policy family.}
	\label{fig:convergence}
\end{figure}

Our analyses for both JIQ-($d_F$,$d_S$) (Section~\ref{sec:jiq-analysis}) and JSQ-($d_F$,$d_S$) (Section~\ref{sec:jsq-analysis}) are approximate because they assume that the server states are independent as the number of servers $k \rightarrow \infty$.
We evaluate the accuracy of our approximations by comparing our analytical results to simulation (see Figure~\ref{fig:convergence}).
As $k$ increases our analytical results for mean response time under both policies become increasingly accurate.
By $k=500$, the analytical and simulation results are indistinguishable.
We obtained similar results for other system parameter settings.

\subsection{Mean Response Time}
\label{sec:jiq_jsq_results}

\begin{figure}[h!]
	\small{
		\begin{tabular}{>{\centering\arraybackslash}m{1.4cm}>{\centering\arraybackslash}m{4.5cm}>{\centering\arraybackslash}m{4.5cm}>{\centering\arraybackslash}m{4.5cm}}
			& {\large $q_F$ = 0.2} & {\large $q_F$ = 0.5} & {\large $q_F$ = 0.8}\\
			{\large $r$ = 1.1} &
			\includegraphics[scale=0.4]{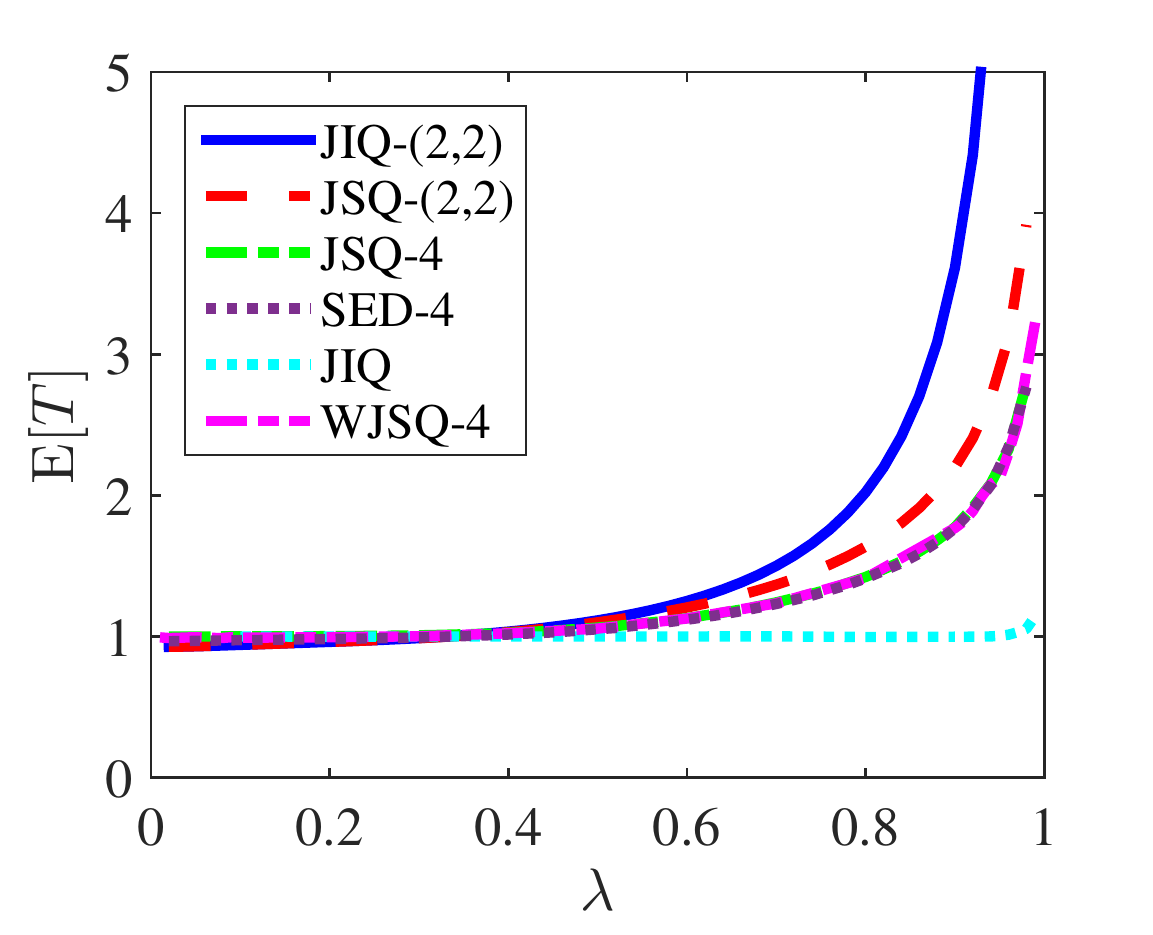} &
			\includegraphics[scale=0.4]{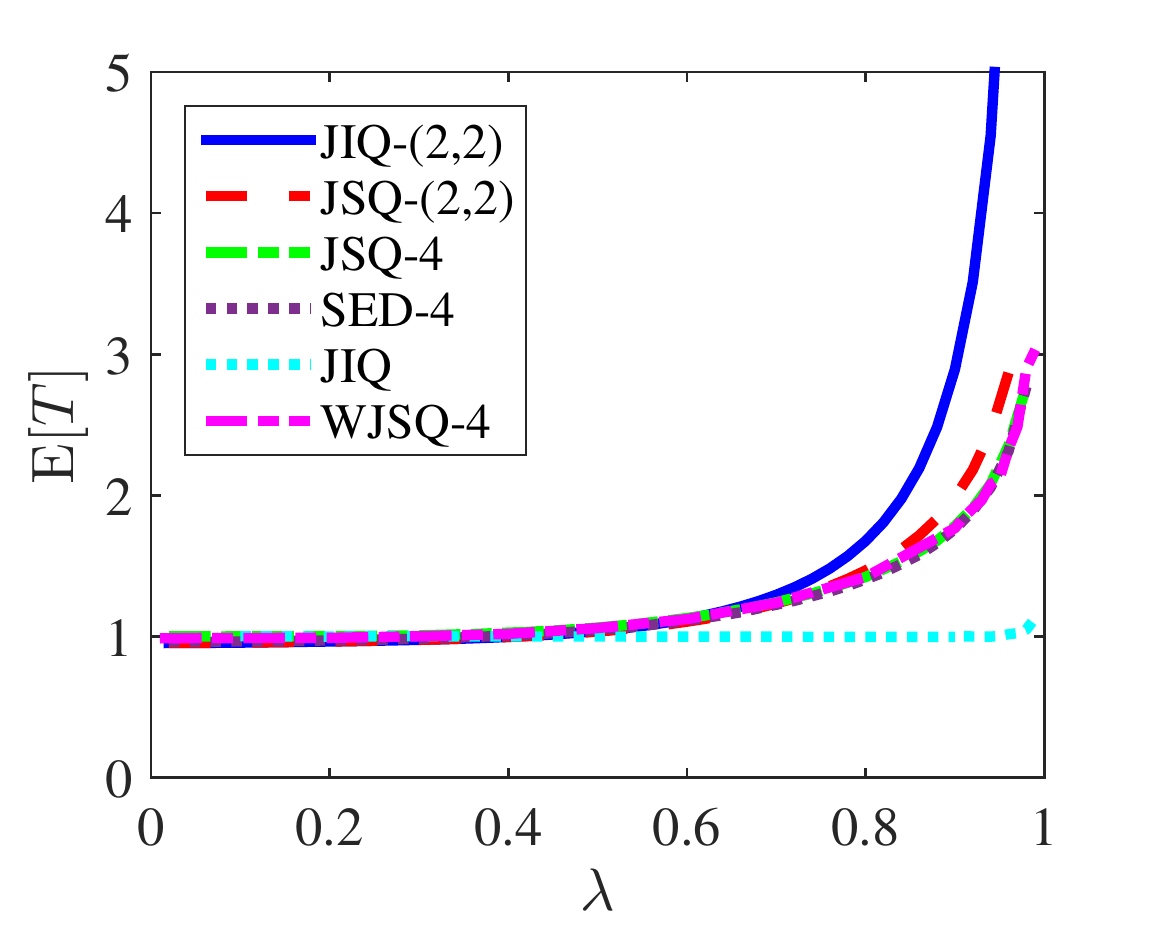} &
			\includegraphics[scale=0.4]{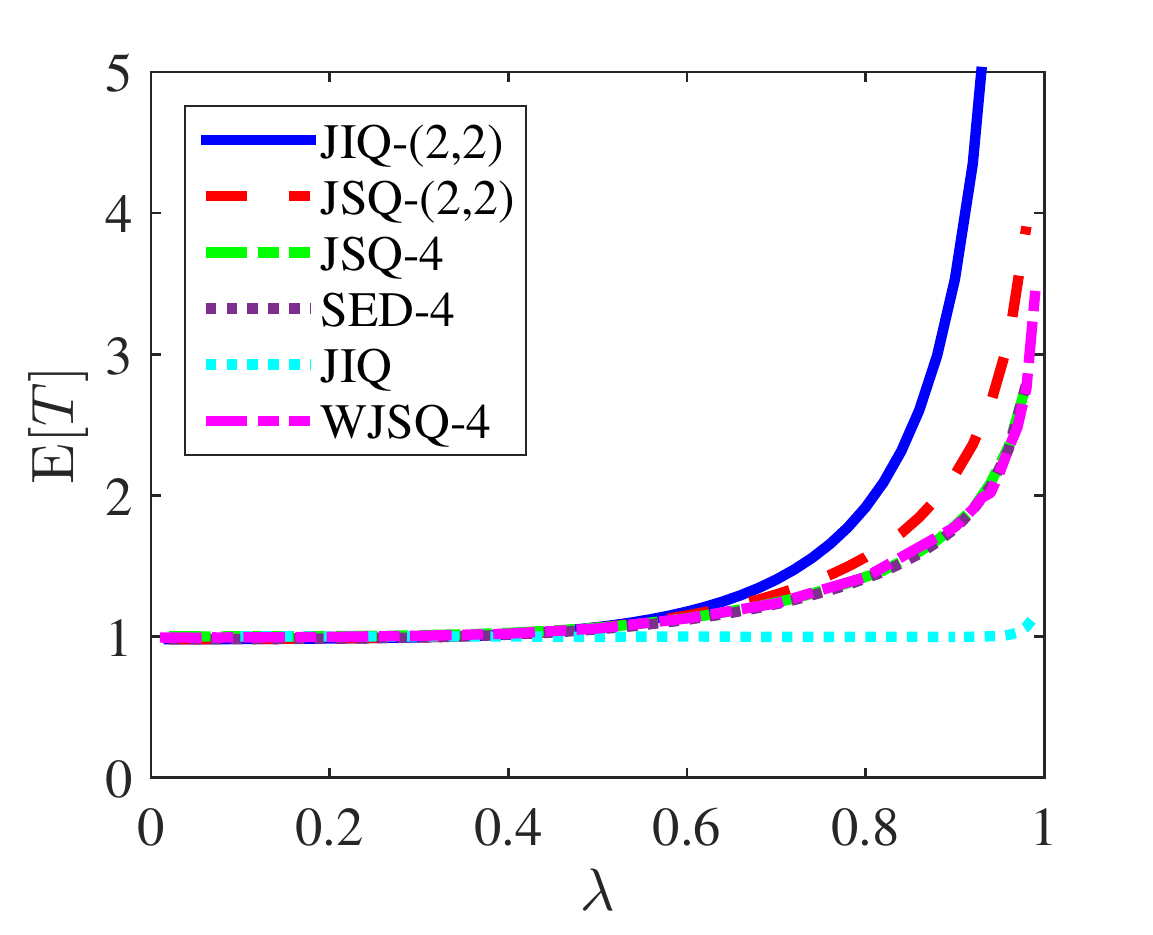} \\
			{\large $r$ = 2} &
			\includegraphics[scale=0.4]{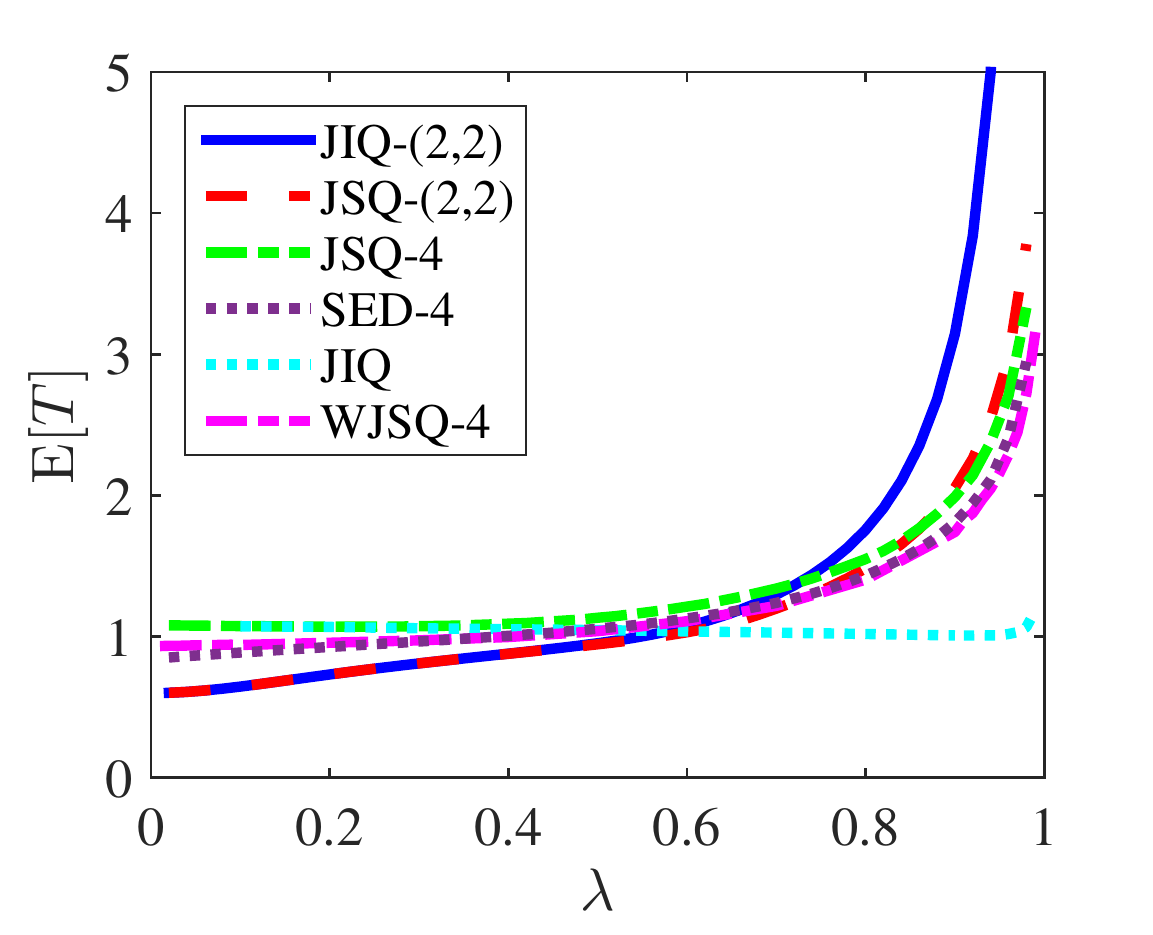} &
			\includegraphics[scale=0.4]{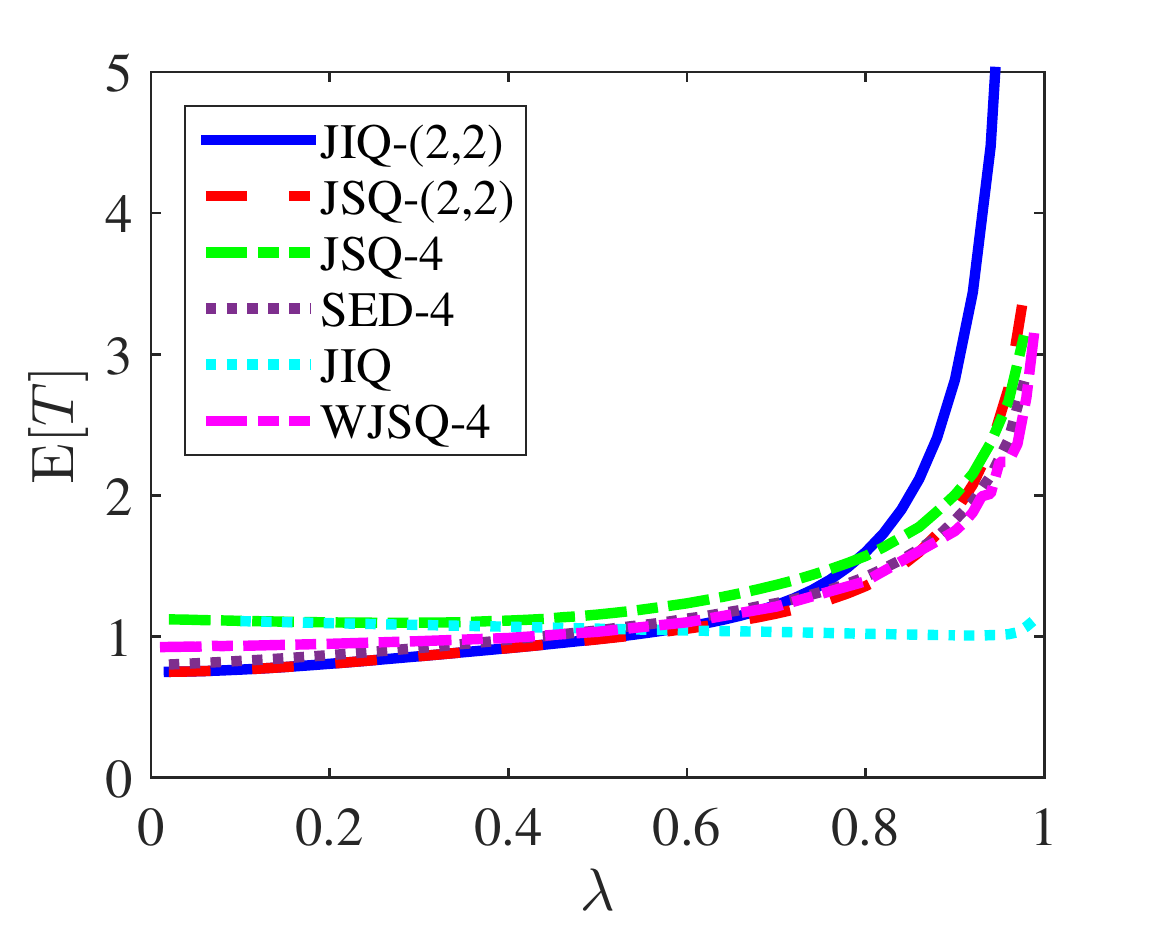} &
			\includegraphics[scale=0.4]{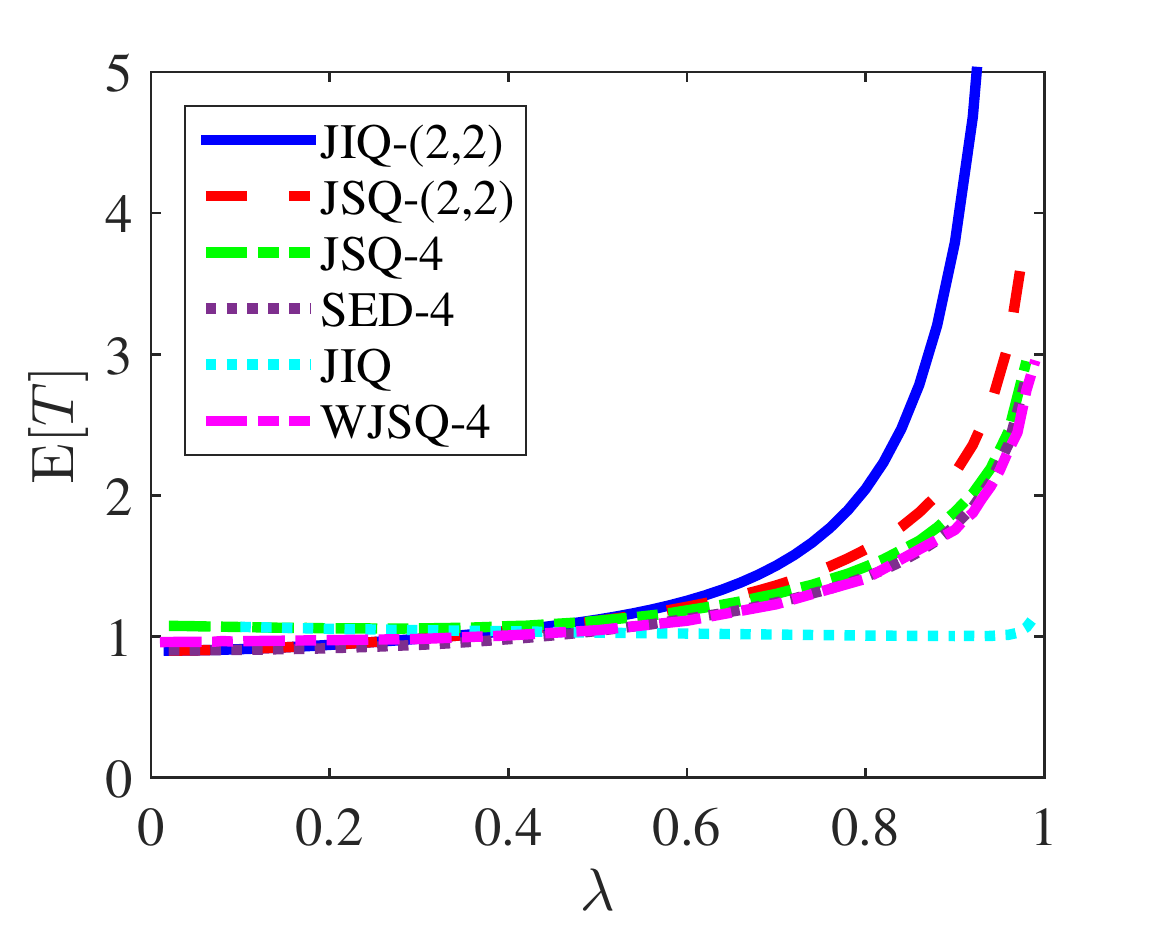} \\
			{\large $r$ = 5} &
			\includegraphics[scale=0.4]{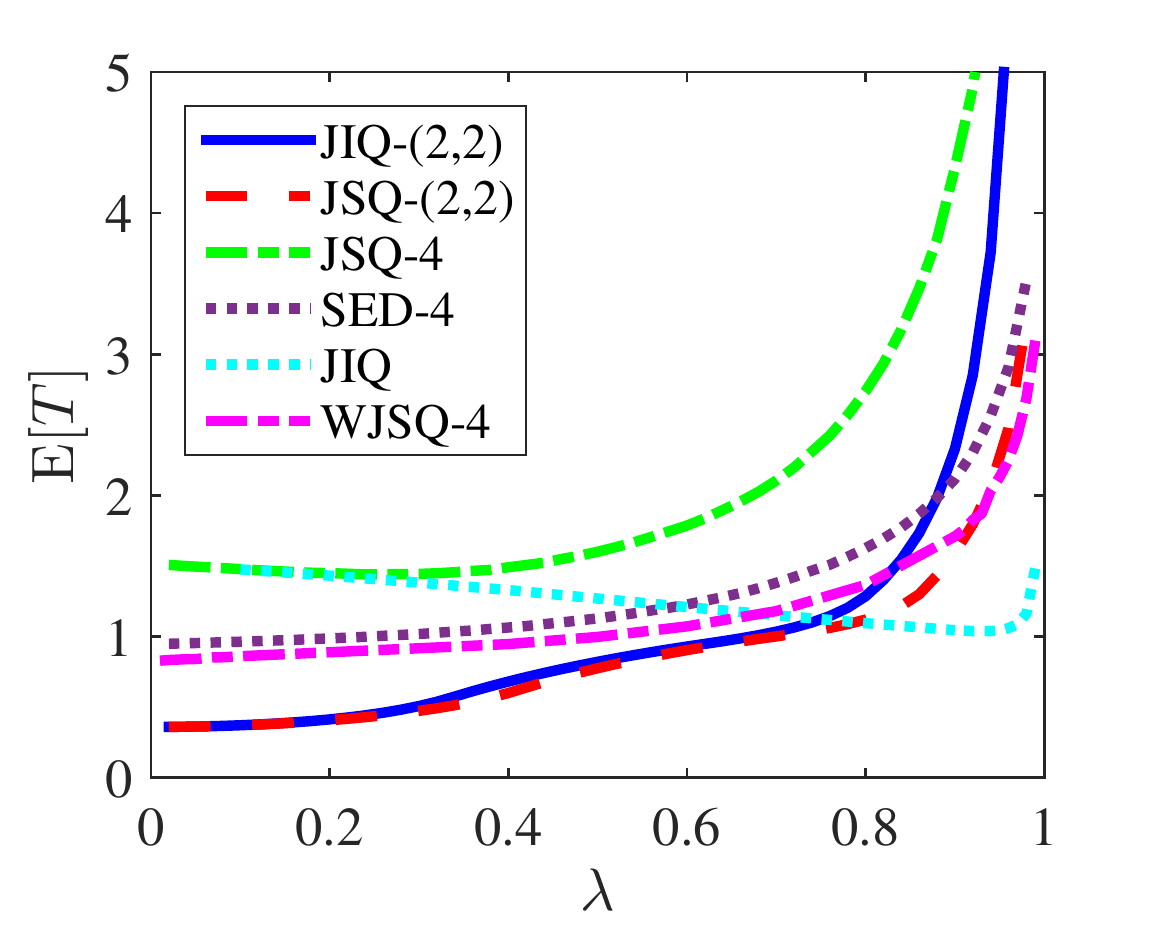} &
			\includegraphics[scale=0.4]{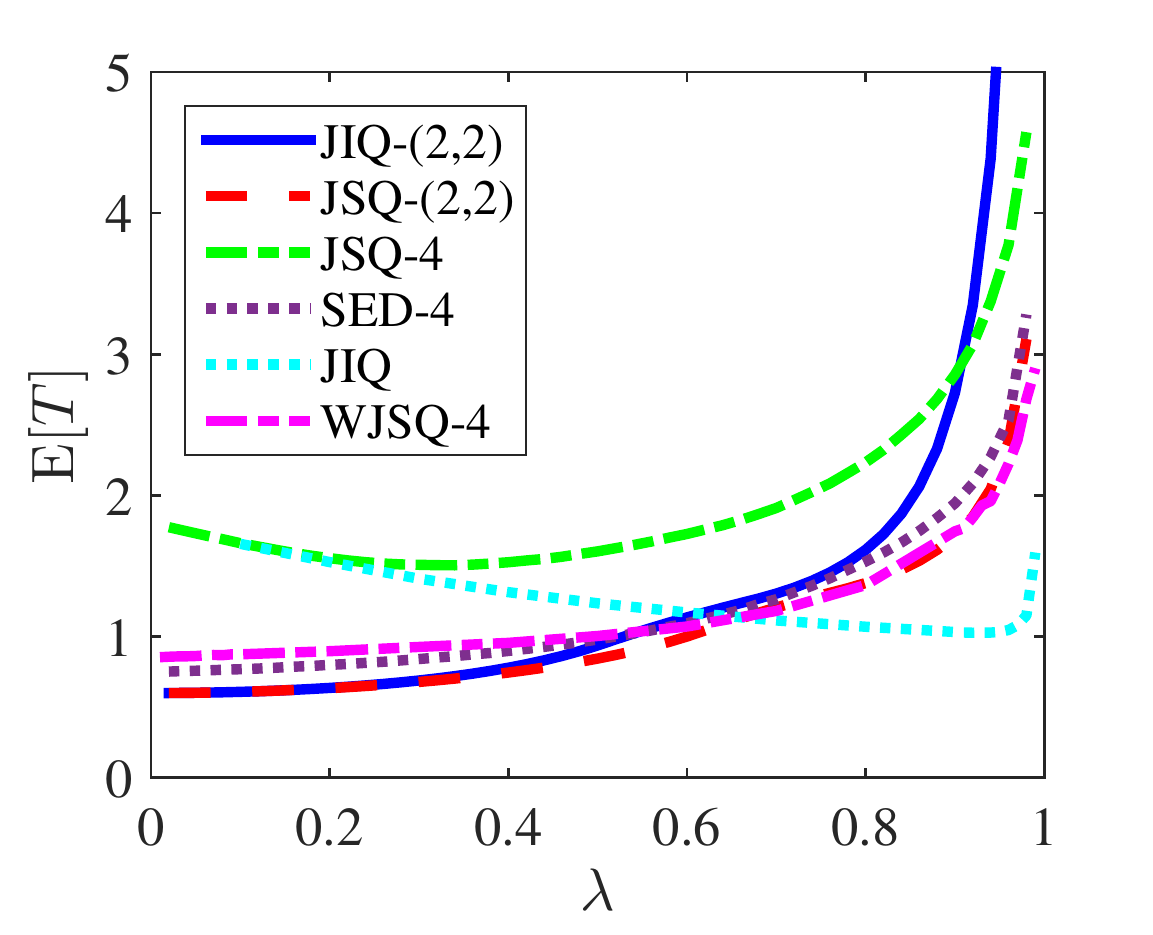} &
			\includegraphics[scale=0.4]{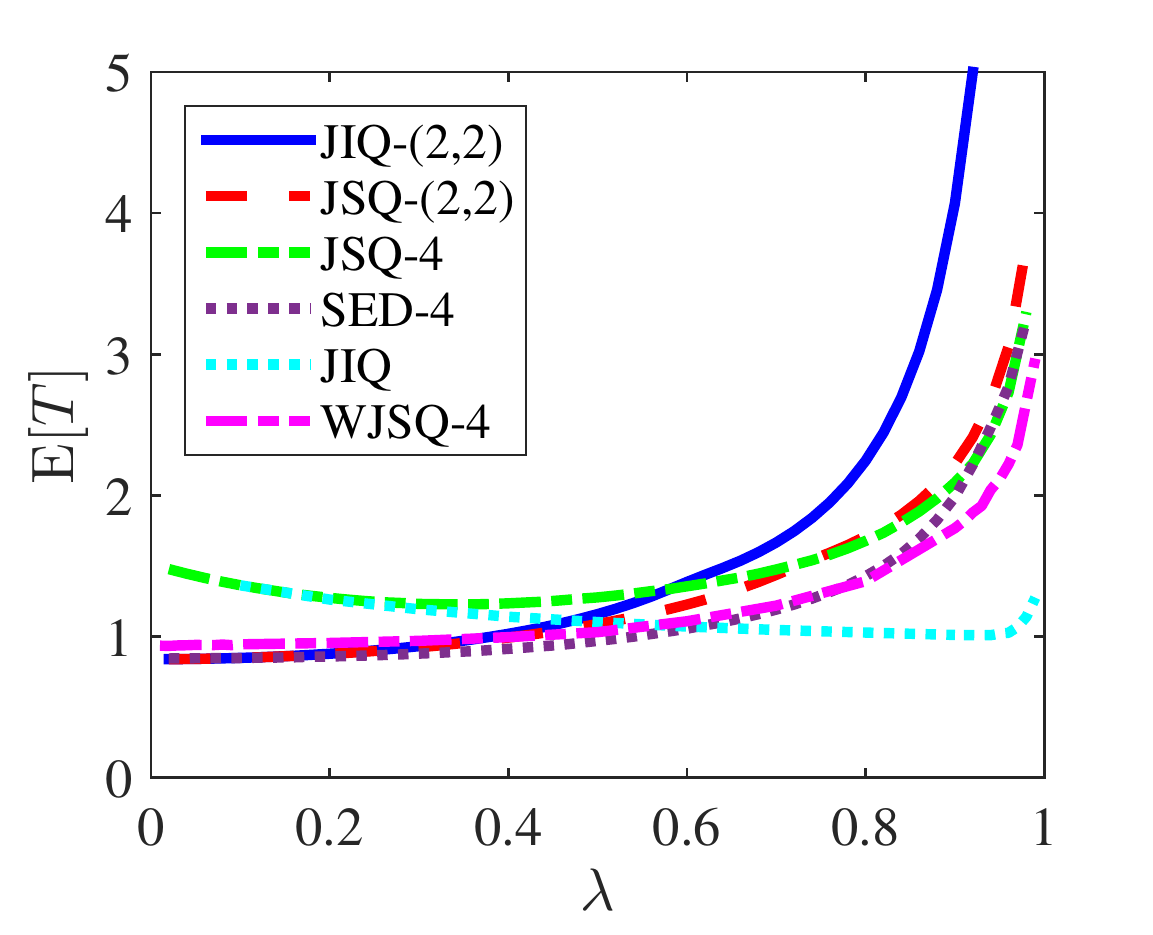} \\
			{\large $r$ = 10} &
			\includegraphics[scale=0.4]{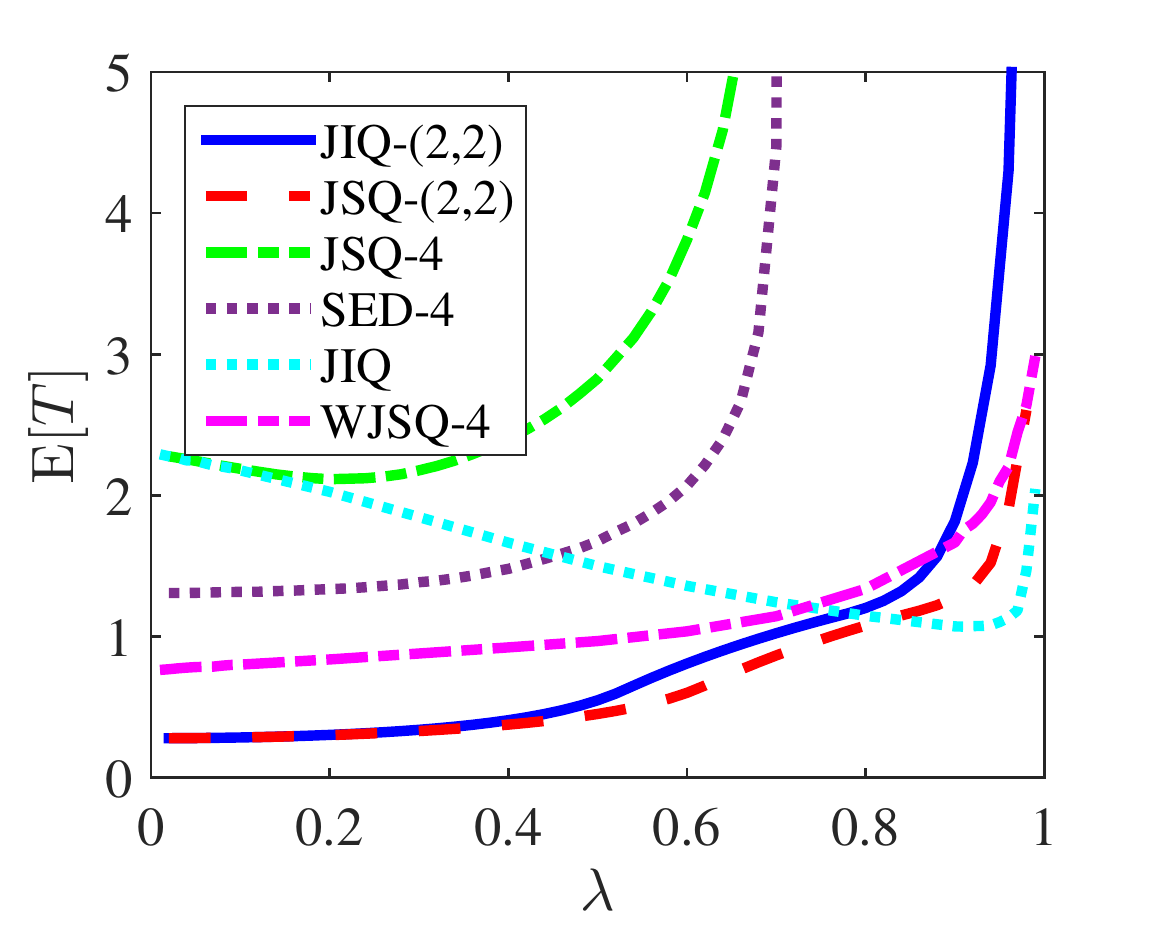} &
			\includegraphics[scale=0.4]{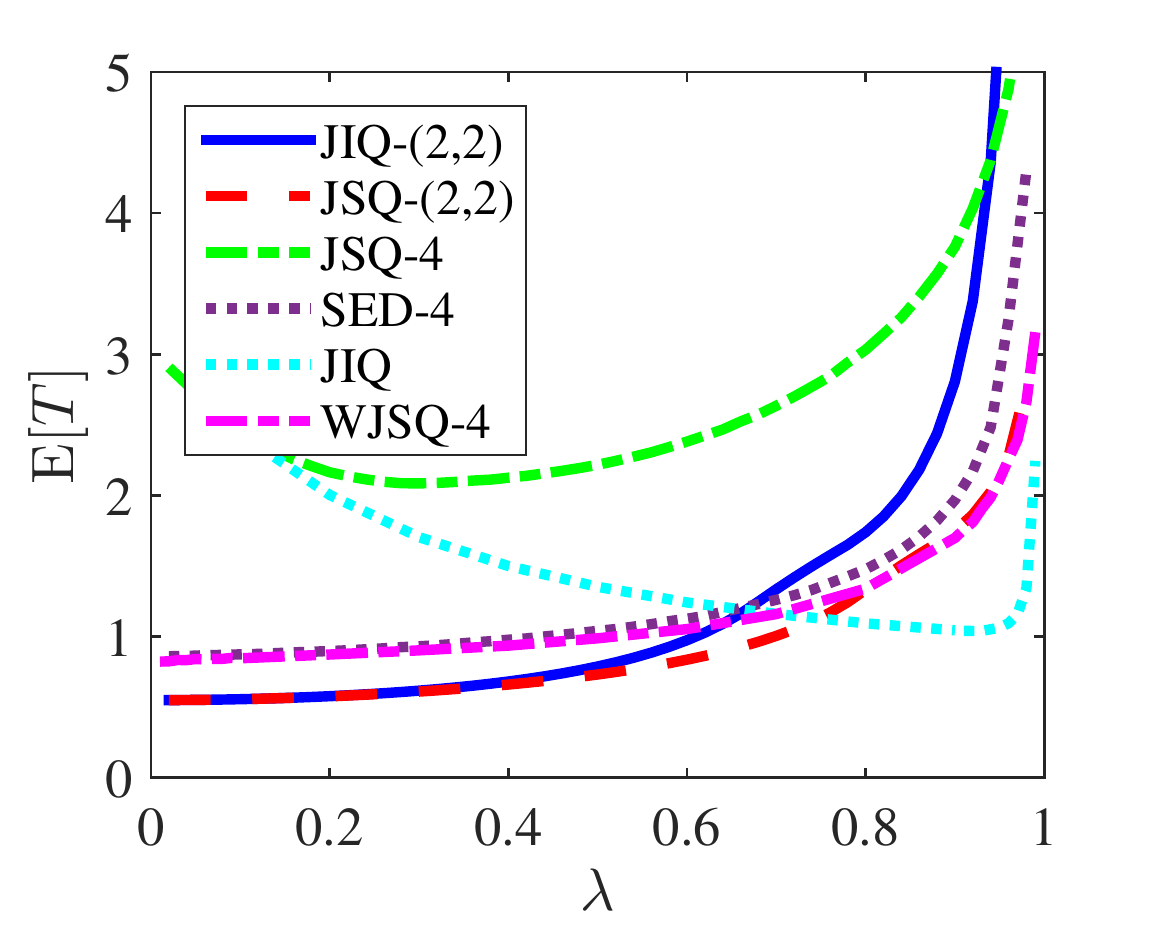} &
			\includegraphics[scale=0.4]{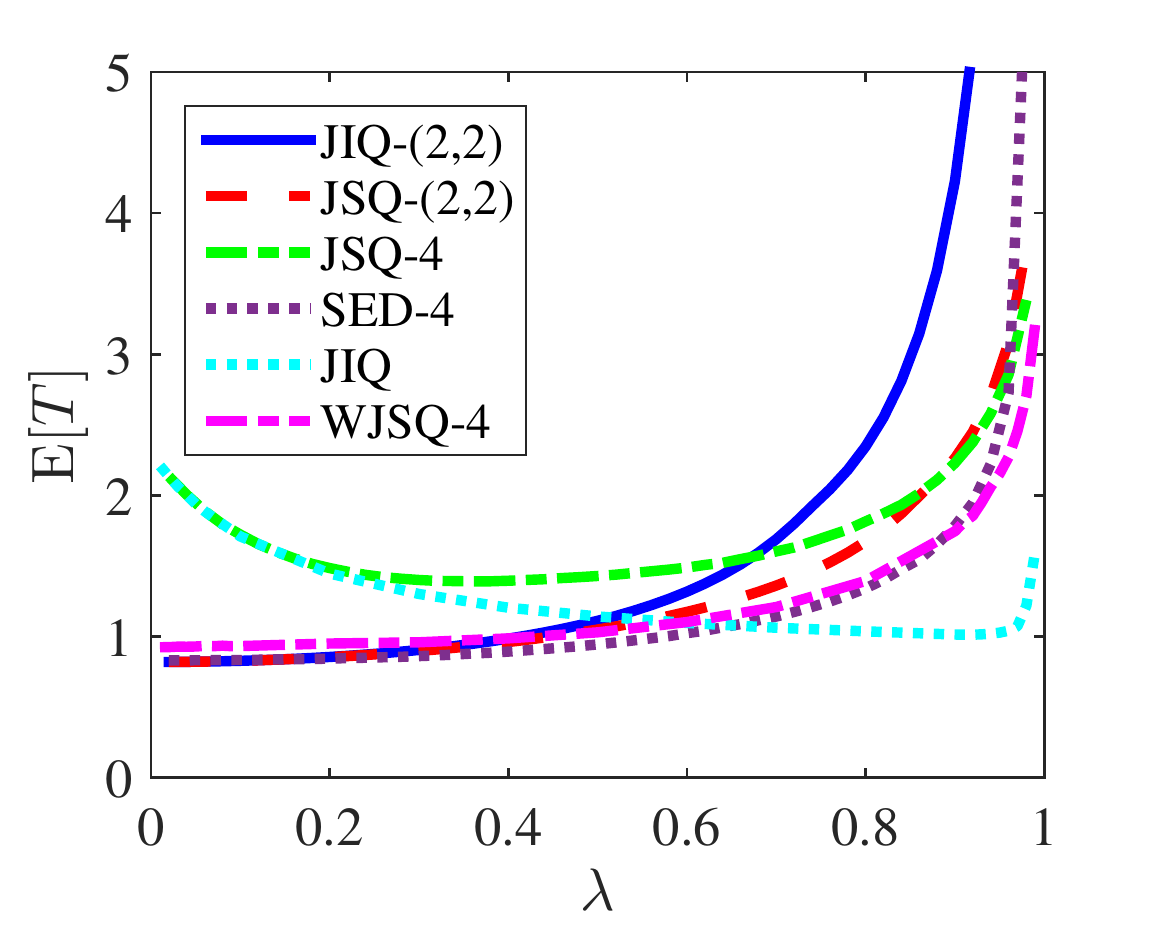} \\
	\end{tabular}}
	\caption{Mean response time as a function of $\lambda$ under JIQ-(2,2), JSQ-(2,2), JSQ-4, SED-4, and JIQ. Left to right: $q_F = 0.2$, $q_F = 0.5$, $q_F = 0.8$. Top to bottom: $r = 1.1$, $r = 2$, $r = 5$, $r = 10$.}
	\label{fig:mean_response_times}
\end{figure}

Figure~\ref{fig:mean_response_times} compares mean response time under JIQ-($d_F$,$d_S$) and JSQ-($d_F$,$d_S$) to that under four other policies (results for our policies are analytical, while results for the following policies are simulated):
\begin{itemize}[leftmargin=*]
	\item Under \textbf{JSQ-$d$}, the dispatcher queries $d$ servers uniformly at random and sends the job to the server among those $d$ with the shortest queue.
	\item Under \textbf{SED-$d$}, the dispatcher queries $d$ servers uniformly at random and sends the job to the server among those $d$ at which it has the shortest expected delay.
	\item Under \textbf{WJSQ-$d$} (the W stands for ``Weighted''), the dispatcher queries $d$ servers, where the probability that a server is queried is proportional to that server's speed, and sends the job to the server among those $d$ with the shortest queue.
	\item Under \textbf{JIQ}, the dispatcher sends the job to an idle server if there is one, and to a busy server chosen uniformly at random otherwise.
\end{itemize}
We note that JSQ-$d$ and JIQ are heterogeneity-unaware, SED-$d$ only uses heterogeneity information when dispatching, and WJSQ-$d$ only uses heterogeneity information when querying. 
Unlike the other five polices that we consider, JIQ is not a ``power of $d$'' policy; we include it here as a point of comparison because it is known to minimize the probability that an arriving job waits in the queue~\cite{stolyar2015pull}.

When there is little difference in speed between fast and slow servers ($r=1.1$, top row of Figure~\ref{fig:mean_response_times}), JSQ-$d$ and SED-$d$ perform similarly to each other, and both outperform our policies at high load.
This is because when all servers are similar in speed, providing more flexibility when selecting among queried servers offers a greater advantage than ensuring that some fast servers are queried.
But in systems with more pronounced heterogeneity, JSQ-$d$ and SED-$d$ cannot maintain their good performance.
As $r$ increases, JSQ-$d$ suffers significantly: here it is a serious shortcoming to make dispatching decisions based only on queue lengths. 
SED-$d$ corrects for this problem by scaling queue lengths in proportion to server speeds.
Yet when $r$ is high and $q_F$ is low, both JSQ-$d$ and SED-$d$ can lead to instability.
In this regime, much of the system's capacity belongs to the fast servers, but an arriving job may not query any fast servers because JSQ-$d$ and SED-$d$ use uniform querying (e.g., when $q_F=0.2$, only about $40\%$ of jobs query a fast server).
This causes the slow servers to become overloaded.
WJSQ-$d$ avoids instability in this regime by ensuring that faster servers are more likely to be queried and thus sent a job. 
However, performance under WJSQ-$d$ still suffers at low load; here all queue lengths are relatively short, so WJSQ-$d$ effectively ignores server speeds when dispatching.

Our policies remain stable and achieve better performance by differentiating between fast and slow servers both when querying and when choosing where to dispatch among the queried servers.
At low load, JIQ-($d_F$,$d_S$) and JSQ-($d_F$,$d_S$) perform similarly to each other, and both outperform SED-$d$, JSQ-$d$, and WJSQ-$d$.
As $r$ increases, the gap between our policies and JSQ-$d$ becomes particularly pronounced: JSQ-$d$ frequently sends jobs to slow servers even when there are idle fast servers, whereas our policies are more likely to find and select an idle fast server.
Indeed, our policies effectively throw out the slow servers when load is sufficiently low or $r$ is sufficiently high.
At high load, too, our policies perform competitively with or better than JSQ-$d$, SED-$d$, and WJSQ-$d$.
Most notably, while JSQ-$d$ and SED-$d$ have a reduced stability region when $q_F$ is low and $r$ is high, both JIQ-($d_F$,$d_S$) and JSQ-($d_F$,$d_S$) are guaranteed to be stable provided $\lambda < \mu_F q_F + \mu_S q_S$, as shown in Theorem~\ref{thm:stability}.

Unsurprisingly, JSQ-($d_F$,$d_S$) always outperforms JIQ-($d_F$,$d_S$).
This makes sense: when using the same $p_F$ and $p_S$ values, the only difference between the two policies is that the JSQ version makes a better dispatching decision when choosing among busy servers.
Note that the results in Figure~\ref{fig:mean_response_times} do \emph{not} necessarily have the same values of $p_F$ and $p_S$ for JSQ-($d_F$,$d_S$) and JIQ-($d_F$,$d_S$) because both policy families are optimized over the parameters. 
Even though JSQ-($d_F$,$d_S$) is guaranteed to achieve lower mean response time than JIQ-($d_F$,$d_S$), the two policies perform similarly until $\lambda$ becomes high.
At this point JSQ-($d_F$,$d_S$)'s advantage becomes more apparent, as this is when queues actually build up.
Under both JIQ-($d_F$,$d_S$) and JSQ-($d_F$,$d_S$), mean response time appears to be non-convex in $\lambda$.
This surprising result is due to our optimization over $p_F$ and $p_S$.
For any fixed $p_F$ and $p_S$, mean response time is convex in $\lambda$, and indeed the convex regions in the plots in Figure~\ref{fig:mean_response_times} occur when $p_F$ and $p_S$ do not change (for example, when $\lambda$ is relatively low it is optimal to set $p_S = 0$, i.e., to never use the slow servers).
The non-convex regions appear when either $p_F$ or $p_S$ is varying between 0 and 1.

We also compare our policies to JIQ, which uses queue length information from all servers, not just a subset of $d$ servers.
At high load, JIQ outperforms all of the ``power of $d$'' policies; this is unsurprising given that JIQ will always find an idle server if there is one.
But at low load and high $r$, JIQ yields a substantially higher mean response time than our policies.
This is because, like JSQ-$d$ and WJSQ-$d$, JIQ does not use server speed information to break ties between idle servers.
That our policies outperform JIQ may seem surprising in light of the fact that JIQ is delay optimal~\cite{stolyar2015pull}; we explore this result further in Section~\ref{sec:response_time_dist}.

\subsection{Queue Length Distribution}
\label{sec:response_time_dist}

\begin{figure}[h]
	\begin{center}
		\begin{tabular}{>{\centering\arraybackslash}m{.6cm}>{\centering\arraybackslash}m{5.0cm}>{\centering\arraybackslash}m{5.0cm}>{\centering\arraybackslash}m{5.0cm}}
			& (a) $r=1.1$, $\lambda=0.5$ & (b) $r=5$, $\lambda=0.8$ & (c) $r=10$, $\lambda=0.2$\\
			\rotatebox{90}{ \textbf{Fast servers}} &
			\includegraphics[scale=0.39]{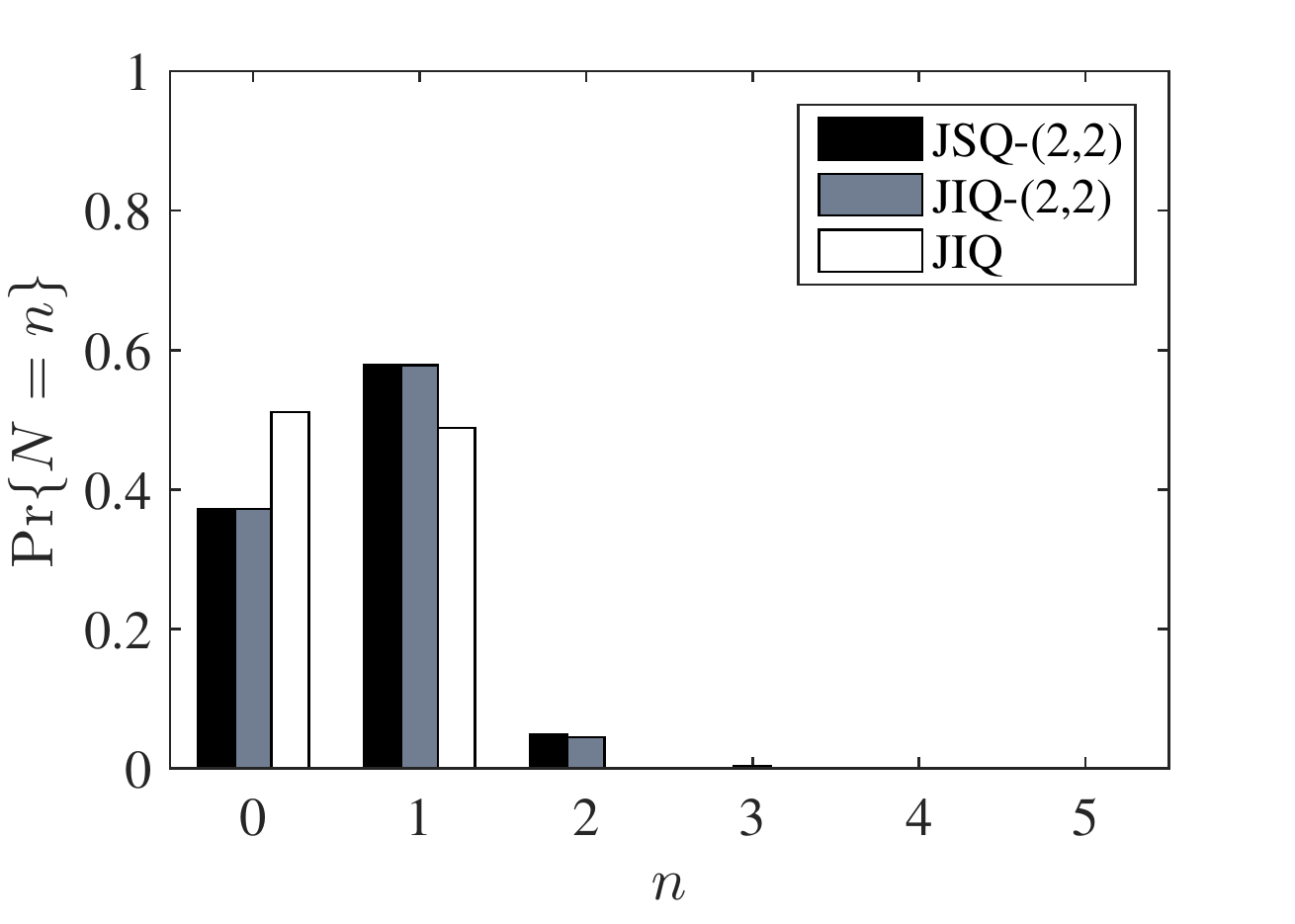} &
			\includegraphics[scale=0.39]{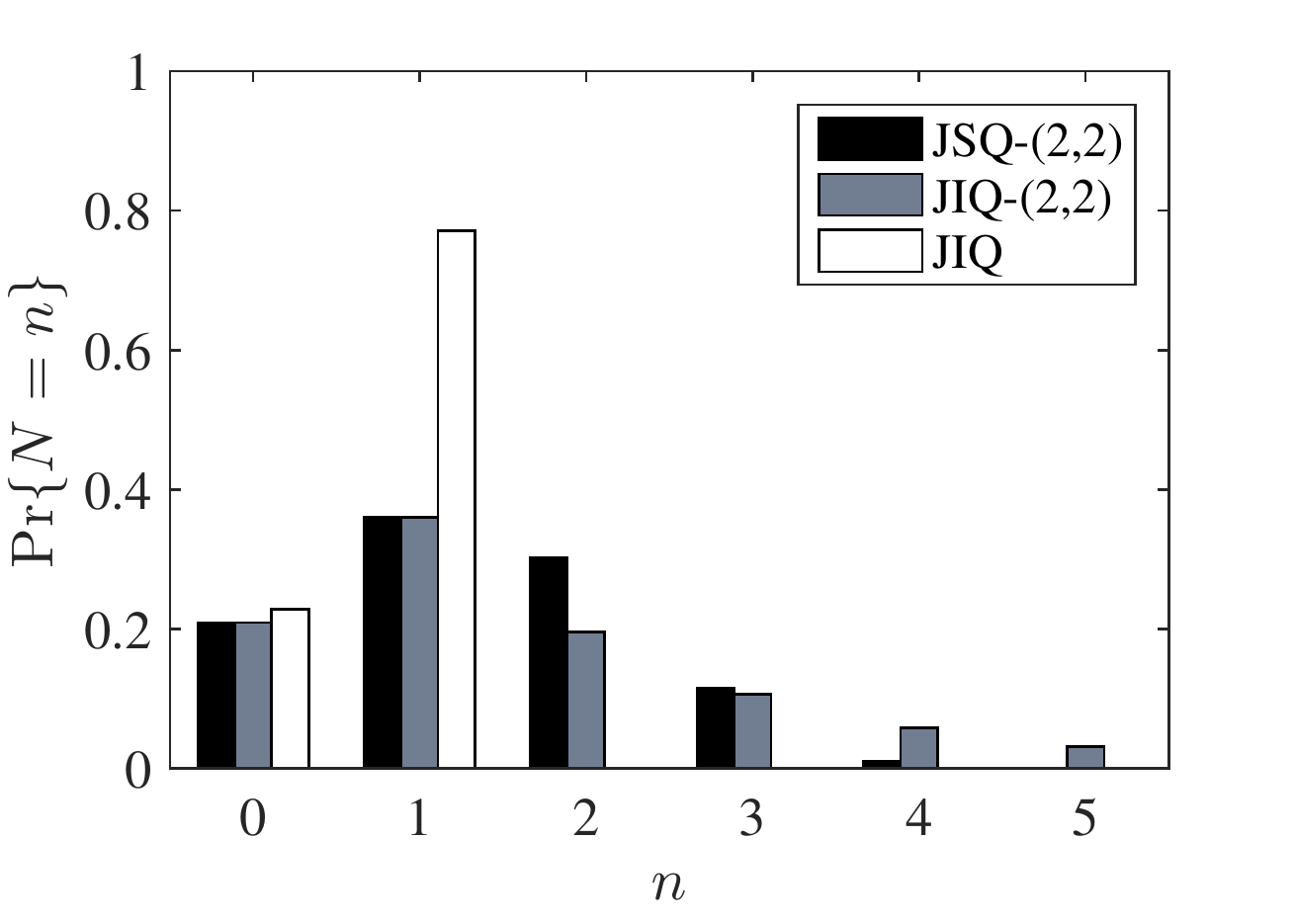} &
			\includegraphics[scale=0.39]{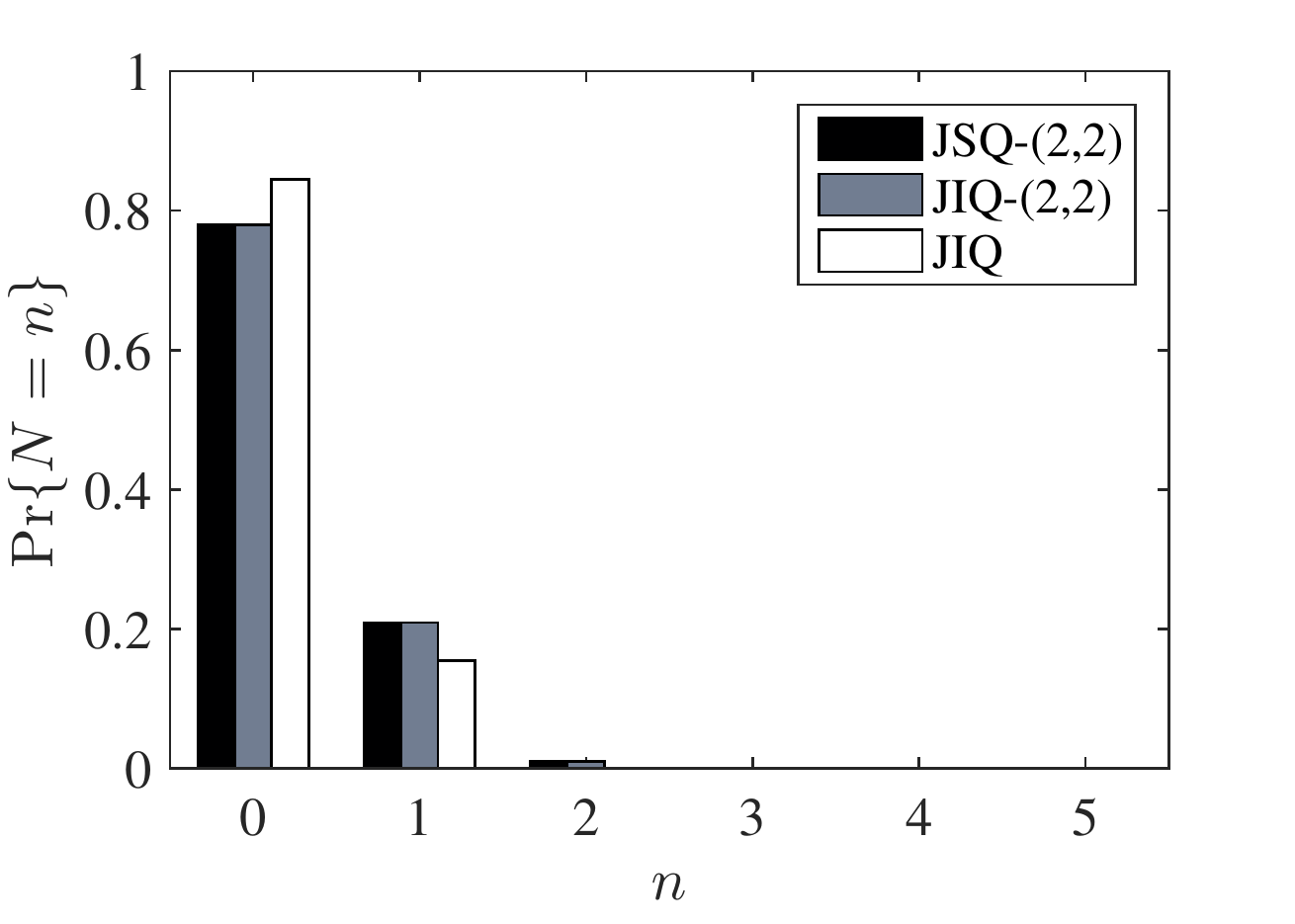} \\
			\rotatebox{90}{ \textbf{Slow servers}} &
			\includegraphics[scale=0.39]{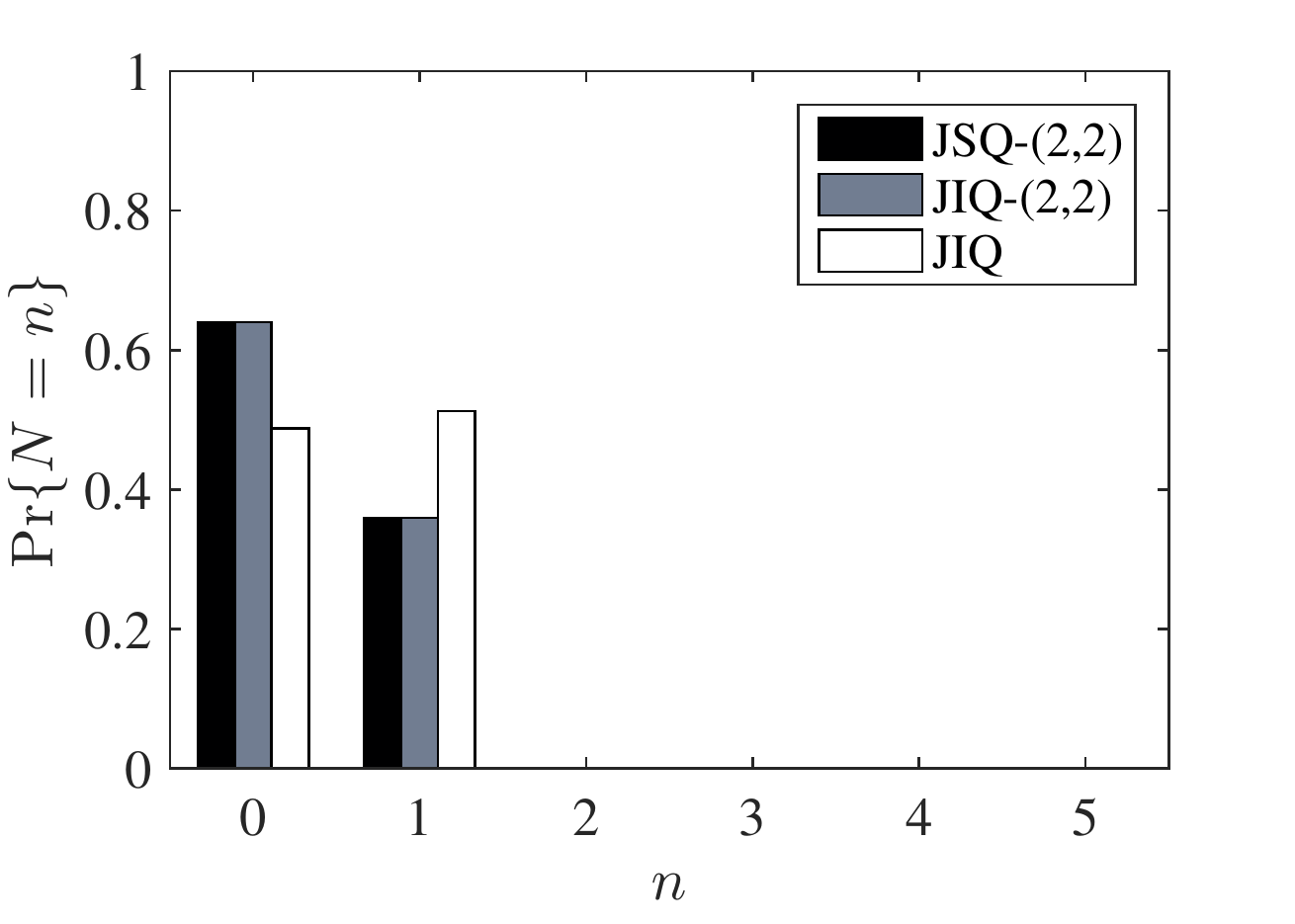} &
			\includegraphics[scale=0.39]{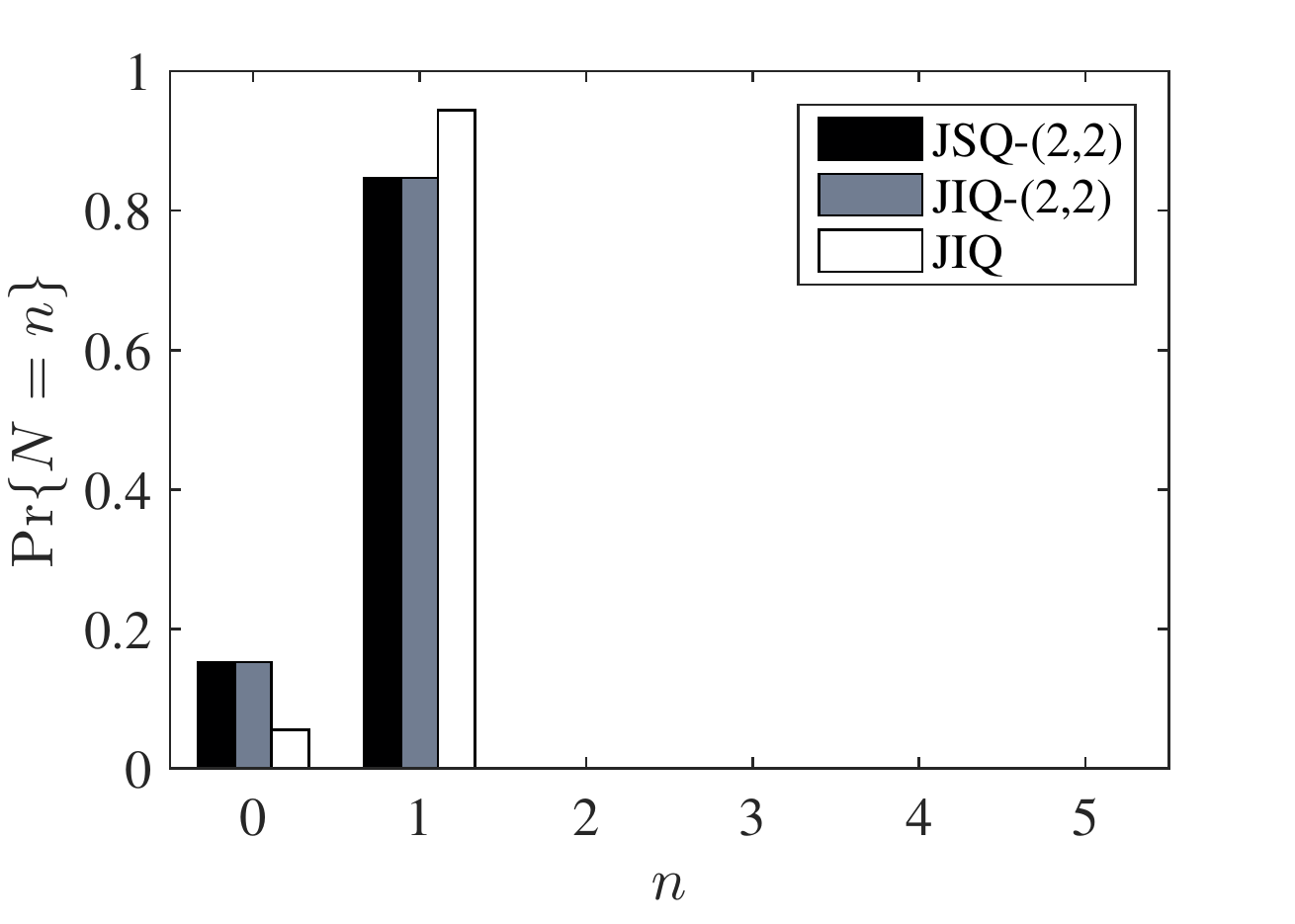} &
			\includegraphics[scale=0.39]{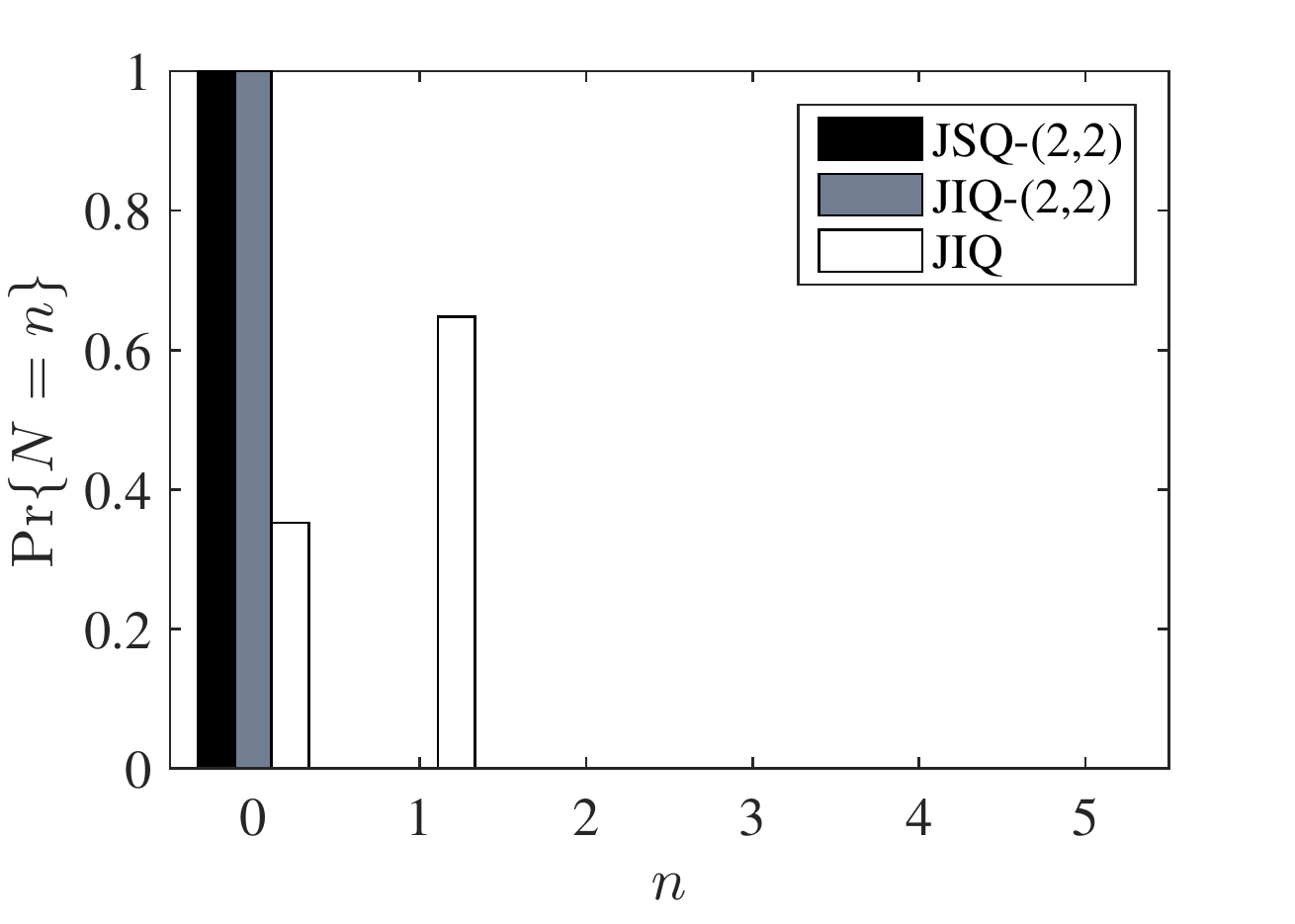}\\
		\end{tabular}
	\end{center}
	\caption{Comparing the queue length distribution under JSQ-(2,2), JIQ-(2,2), and JIQ for fast servers (top row) and slow servers (bottom row) when $q_F = 0.5$. (a) $r=1.1$, $\lambda = 0.5$, (b) $r=5$, $\lambda = 0.8$, (c) $r=10$, $\lambda = 0.2$.}
	\label{fig:queue_length_dists}
\end{figure}

In this section we look at the queue length distributions under JIQ-($d_F,d_S$), JSQ-($d_F,d_S$), and JIQ in more detail to gain insight as to why our policies can outperform JIQ in terms of response time, even though they lack JIQ's queue length optimality property.

Figure~\ref{fig:queue_length_dists} shows the queue length distribution under JIQ-($d_F,d_S$), JSQ-($d_F,d_S$), and JIQ for both fast servers (top row) and slow servers (bottom row) in three settings selected from those featured in Figure~\ref{fig:mean_response_times}.
At left, we show a case in which all three policies have similar mean response times; in this case the queue length distributions are also similar.
The center column shows a case in which JIQ yields lower mean response time than our policies: in this case $r=5$ and $\lambda=0.8$.
Because $\lambda$ is high, few slow servers are idle, but both our policies and JIQ prevent queues from building up at the slow servers.
The key difference between the policies lies in what happens at the fast servers.
Under our policies, the optimal value of $p_F$ in this setting is 1, meaning that a job will never choose to wait in the queue at a slow server.
This means that many jobs are deferred back to the (busy) fast servers, causing the queue lengths to increase.
JIQ prevents the queue lengths at the fast servers from growing.
A slightly greater proportion of jobs run on slow servers under JIQ, but the jobs that run on fast servers do not have to wait in the queue.
When $\lambda$ is high, this tradeoff favors JIQ.

In contrast, when $\lambda$ is low the same tradeoff favors JIQ-($d_F,d_S$) and JSQ-($d_F,d_S$), as shown in the right column of Figure~\ref{fig:queue_length_dists}, where $r=10$ and $\lambda=0.2$.
Again, under JIQ a higher proportion of slow servers are busy because JIQ does not differentiate between fast and slow idle servers.
Indeed, there are no busy slow servers under JIQ-($d_F,d_S$) and JSQ-($d_F,d_S$) because the combination of high $r$ and low $\lambda$ means that the optimal value of $p_S$ is 0: it is best not to use any of the slow servers at all.
As a result, the fast servers have a slightly lower probability of being idle under our policies than under JIQ.
However, because $\lambda$ is low the queue lengths under JIQ-($d_F,d_S$) and JSQ-($d_F,d_S$) remain short.
In this case, JIQ's decision to prioritize server idleness over server speed works against it, and our policies achieve lower mean response time.

\subsection{Sensitivity to $d$}

One of the primary selling points of policies like JSQ-$d$, SED-$d$, and WJSQ-$d$ is the ``power of two choices'': often, there is a large benefit in going from $d=1$ (i.e., random routing) to $d=2$, but a much smaller marginal benefit in further increasing $d$.
Consequently, JSQ-2 is the most commonly considered variant of JSQ-$d$.
Our JIQ-($d_F$,$d_S$) and JSQ-($d_F$,$d_S$) policies query fast and slow servers separately; while setting $d_F = d_S = 1$ offers two choices in total, it does not offer a choice within each speed.
Therefore, JIQ-(1,1) and JSQ-(1,1) are equivalent: once the dispatcher has chosen to send the job to a fast (or slow) server there is only one choice for which server to use.
Henceforth, we will refer to both policies as JIQ-(1,1).

Unlike JSQ-2 and SED-2, JIQ-(1,1) uses queue length information only when deciding between an idle slow server and a busy fast server; all other decisions are made probabilistically.  This makes JIQ-(1,1) much closer to random routing than either JSQ-2 or SED-2, and one might think that consequently JIQ-(1,1) would generally exhibit poor performance.  However, our results indicate the opposite: JIQ-(1,1) often substantially outperforms JSQ-2 and SED-2, especially when $q_F$ is low (see Figure~\ref{fig:power_of_2}).
As we have seen, both JSQ-2 and SED-2 can cause instability when $q_F$ is low and $r$ is high, whereas JIQ-(1,1) guarantees that the system will remain stable.

\begin{figure}
	\begin{center}
	\begin{tabular}{cc}
		(a) $q_F = 0.2$ & (b) $q_F = 0.8$ \\
		\hspace{-0.1in} \includegraphics[scale=0.36]{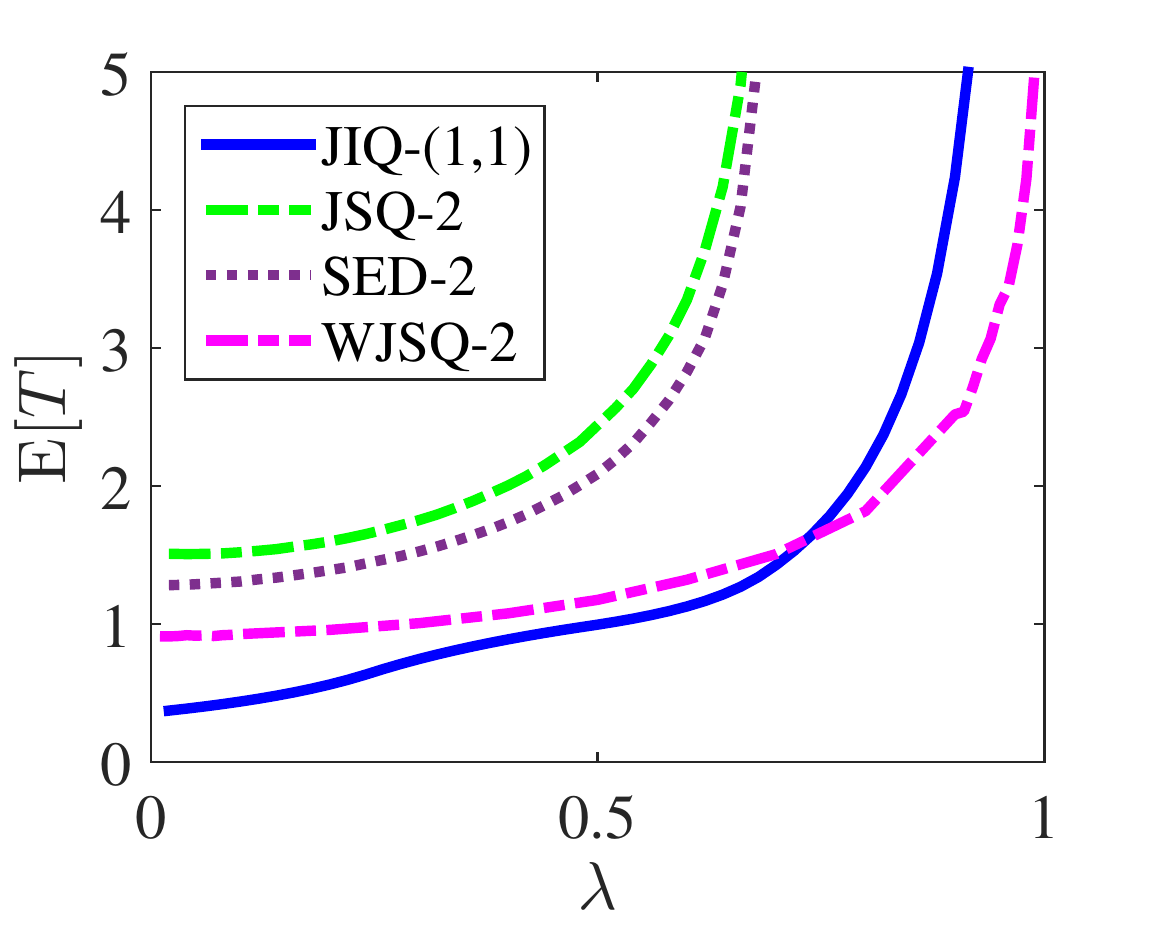} &
		\includegraphics[scale=0.36]{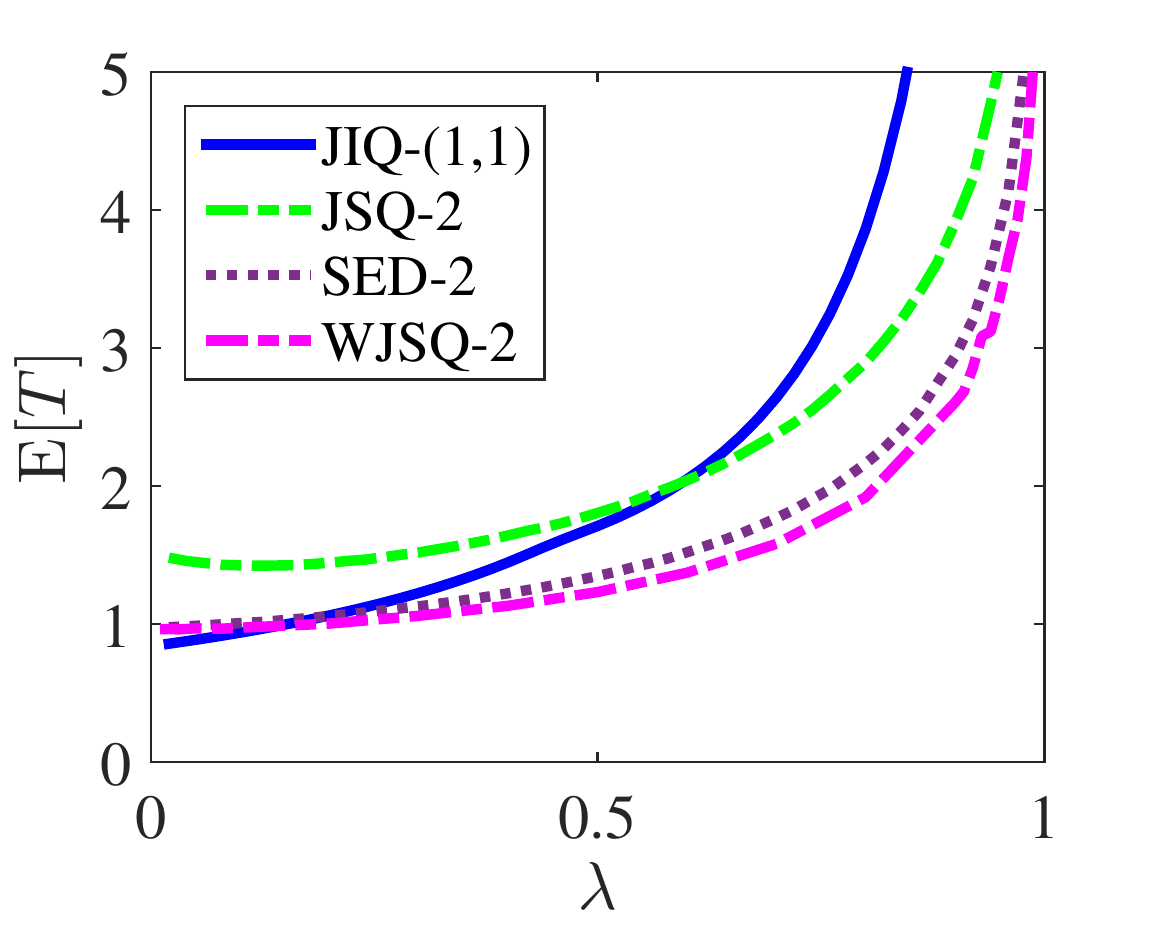} 
	\end{tabular}
\end{center}
	\caption{Mean response time as a function of $\lambda$ under JIQ-(1,1), JSQ-2, SED-2, and WJSQ-2 when $r=5$. (a) $q_F = 0.2$, (b)~$q_F = 0.8$.} 
	\label{fig:power_of_2}
\end{figure}

In Figure~\ref{fig:varyd} we consider the effect of varying $d = d_F + d_S$ on the performance of JIQ-($d_F$,$d_S$) and JSQ-($d_F$,$d_S$): does the marginal benefit of increasing $d$ decrease as $d$ gets larger?
When $d=1$, we interpret our policies to collapse the querying and dispatching decision points into a single probabilistic choice: we dispatch to a random fast server with probability $p_F$ and to a slow server otherwise.
For all other values of $d$, we choose the optimal combination of $d_F$, $d_S$, $p_F$, and $p_S$ such that $d_F + d_S = d$.
As under JSQ-$d$ and SED-$d$, the steepest drop in mean response time comes from going from $d=1$ to $d=2$, and mean response time is convex in $d$.
When the fast and slow servers are similar in speed (Figure~\ref{fig:varyd}~(a)), JSQ-$d$ and SED-$d$ perform slightly better at low $d$, and all policies have similar performance at high $d$.
When the $r$ is high and $q_F$ is low (Figure~\ref{fig:varyd}~(b)), JIQ-($d_F$,$d_S$) and JSQ-($d_F$,$d_S$) are stable at all values of $d$, and outperform JSQ-$d$ and SED-$d$ even when $d$ is high enough for the latter two policies to be stable.

\begin{figure}
	\begin{center}
	\begin{tabular}{>{\centering\arraybackslash\vspace{.2cm}} m{6.5cm} >{\centering\arraybackslash\vspace{-.2cm}} m{6cm}}
		\multicolumn{2}{c}{(a) $q_F=0.5$, $r=1.1$}\\
		\includegraphics[scale=0.45]{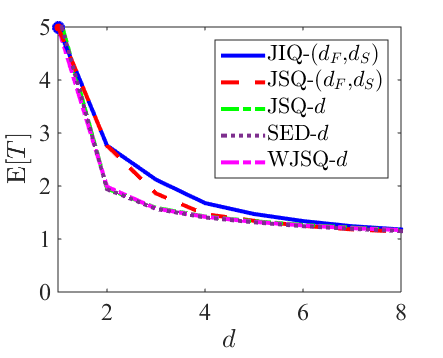} & \small{\begin{tabular}{|c|c|c|}
				\hline
				$d$ & JIQ-($d_F$,$d_S$) & JSQ-($d_F$,$d_S$) \\
				\hline
				2 & (1,1) & (1,1)\\
				3 & (1,2) & (2,1)\\
				4 & (2,2) & (2,2)\\
				5 & (2,3) & (2,3)\\
				6 & (3,3) & (3,3)\\
				7 & (3,4) & (3,4)\\
				8 & (3,5) & (3,5)\\
				\hline
		\end{tabular}}\\
		\hline
		\multicolumn{2}{c}{(b) $q_F=0.2$, $r=10$}\\
		\includegraphics[scale=0.45]{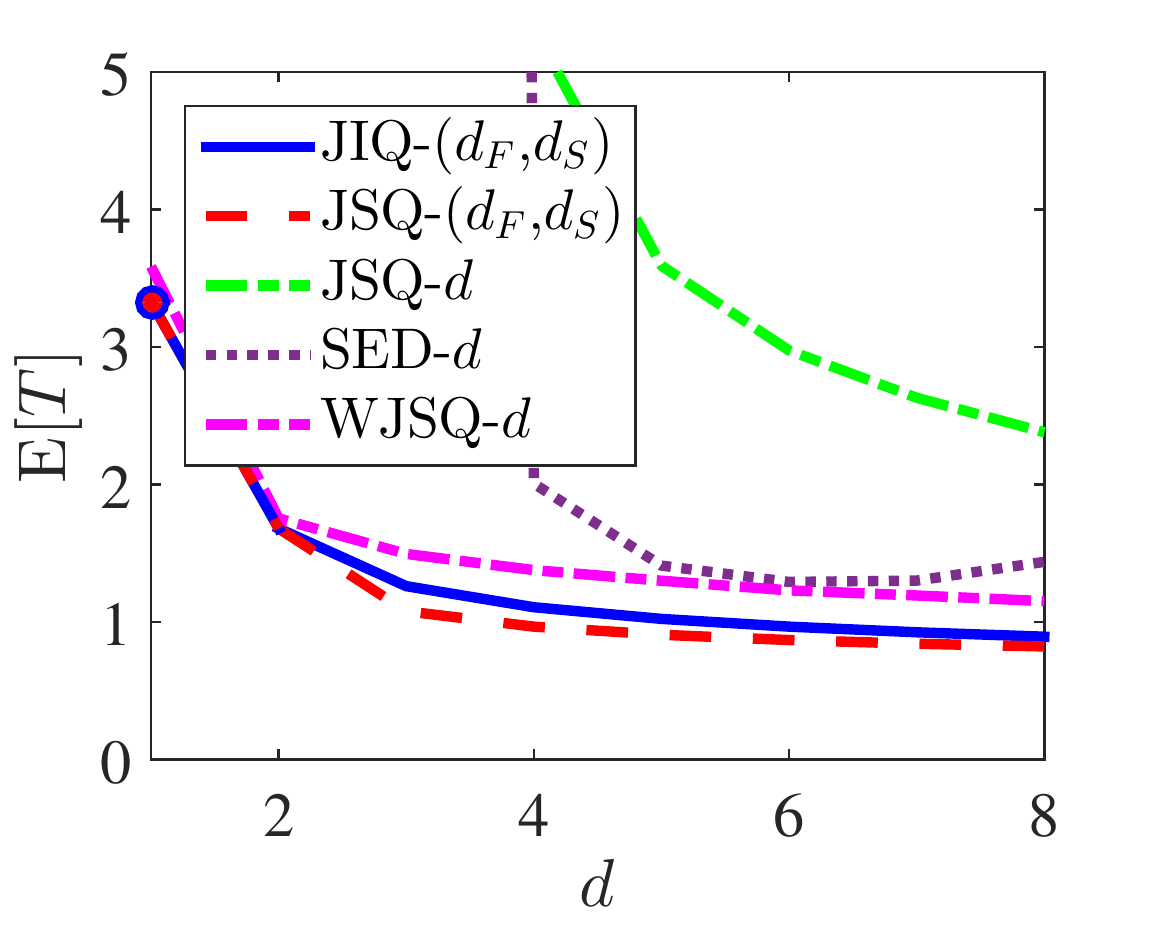} & \small{\begin{tabular}{|c|c|c|}
				\hline
				$d$ & JIQ-($d_F$,$d_S$) & JSQ-($d_F$,$d_S$) \\
				\hline
				2 & (1,1) & (1,1)\\
				3 & (2,1) & (2,1)\\
				4 & (3,1) & (3,1)\\
				5 & (4,1) & (4,1)\\
				6 & (5,1) & (5,1)\\
				7 & (6,1) & (6,1)\\
				8 & (7,1) & (7,1)\\
				\hline
		\end{tabular}}\\
	\end{tabular}
\end{center}
	\caption{Effect of varying $d$ on mean response time under JIQ-($d_F$,$d_S$), JSQ-($d_F$,$d_S$), JSQ-$d$, SED-$d$, and WJSQ-$d$ when $\lambda=0.8$. (a) $q_F=0.5$, $r=1.1$. (b) $q_F=0.2$, $r=10$. The tables at right show the optimal choices of ($d_F,d_S$) for each $d$.}
	\label{fig:varyd}
\end{figure}

\section{A Heuristic for $p_F$ and $p_S$}
\label{sec:heuristic}

A key part of defining the JIQ-($d_F$,$d_S$) and JSQ-($d_F$,$d_S$) policies involves choosing values for $p_F$ and $p_S$; in Sections~\ref{sec:jiq-opt} and~\ref{sec:jsq-opt} we do this by finding the values of $p_F$ and $p_S$ that minimize mean response time.
Figure~\ref{fig:optimization} shows mean response time under JSQ-($d_F$,$d_S$) as a function of $p_F$ and $p_S$ for two different parameter settings (results for JIQ-($d_F$,$d_S$) are similar).
When $\lambda$ is low to moderate (Figure~\ref{fig:optimization}(a)), mean response time is relatively insensitive to the particular parameter choices, provided that $p_S$ is high enough to ensure stability.
When $\lambda$ is high (Figure~\ref{fig:optimization}(b)), it becomes more important to choose the right $p_F$ and $p_S$: even small variations in $p_F$ and $p_S$ can lead to substantial changes in response time, and there is a smaller set of $p_F$ and $p_S$ values for which the system is stable.

\begin{figure}
	\begin{center}
	\begin{tabular}{cc}
		(a) $q_F=0.2$, $r=5$, $\lambda = 0.56$ & (b) $q_F=0.5$, $r=2$, $\lambda=0.95$\\
		\includegraphics[scale=0.4]{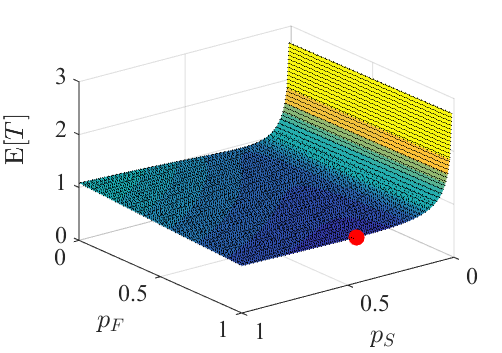} &		
		\includegraphics[scale=0.4]{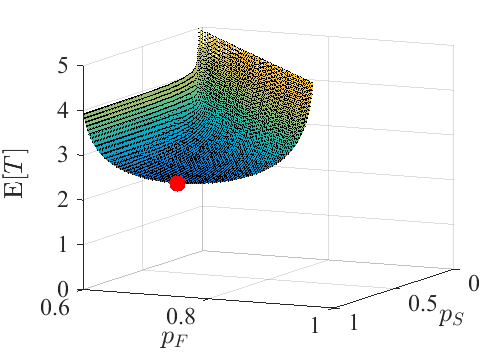}
	\end{tabular}
	\end{center}	
	\caption{Mean response time as a function of $p_F$ and $p_S$. (a)~ $q_F=0.2$, $r=5$, $\lambda = 0.56$, (b) $q_F=0.5$, $r=2$, $\lambda=0.95$. The red circle indicates the optimal $\E{T}$.}
	\label{fig:optimization}
\end{figure}

The extreme sensitivity to $p_F$ and $p_S$ occurs only at very high $\lambda$; at most parameter settings the optimal values of $p_F$ and $p_S$ fall into one of a few cases.
If the fast servers comprise a sufficiently high fraction of the total system capacity or if the system load is very low, it is best to set $p_S = 0$.
If the fast and slow servers are relatively similar in speed or if the system load is sufficiently high, it is best to set $p_S = 1$.
As we showed in Theorem~\ref{thm:stability_highlambda}, as $\lambda \rightarrow 1$, $p_F = \mu_F q_F$ is the only value of $p_F$ for which the system is stable.

Motivated by these observations, we propose a heuristic for choosing appropriate values of $p_F$ and $p_S$.
Instead of optimizing over the entire parameter space for $p_F$ and $p_S$, which can be computationally expensive, we consider the following parameter settings:
\begin{itemize}[leftmargin=*]
	\item $p_S = 0$. Note that in this case the slow servers are never used, so the choice of $p_F$ does not matter.
	\item All combinations of $p_S \in \{ \mu_S q_S, 1 \}$ and $p_F \in \{0, \mu_F q_F, 1 \}$.
\end{itemize}
For each setting of $\lambda$, $q_F$, and $r$, this gives us only seven policies to compare; we select the $p_F$ and $p_S$ that yields the best performance among these seven alternatives.

Table~\ref{tab:heuristics_jiq} shows our results for JIQ-($d_F$,$d_S$) and JSQ-($d_F$,$d_S$); each row shows a different value of $\lambda$, for a system with $q_F=0.2$ and $r=10$. 
Under both policies, when $\lambda$ is low it is optimal to set $p_S = 0$, and our heuristic correctly selects this policy.
As $\lambda$ starts to increase, it becomes optimal to increase $p_S$ continuously and set $p_F = 1$.
Our heuristic sets $p_F = 1$ and changes $p_S$ in discrete steps from 0 to $\mu_S q_S$ to 1; because $\lambda$ is still relatively low, mean response time is relatively insensitive to selecting a slightly suboptimal value of $p_S$ and our heuristic has low error.
When $\lambda$ becomes high, the performance of our heuristic can suffer. 
In this region it becomes optimal to set $p_S = 1$ and decrease $p_F$ continuously, while our heuristic must choose either $p_F = 1$ or $p_F = \mu_F q_F$.
Because $\lambda$ is high, a small change in $p_F$ (which corresponds to a small change in the arrival rate to any individual busy server), can have a big affect on mean response time, and the error of our heuristic can reach as high as 25\%.
However, as $\lambda \rightarrow 1$, the heuristic, which sets $p_S = 1$ and $p_F = \mu_F q_F$, again approaches perfect accuracy because $p_F = \mu_F q_F$ is the only value of $p_F$ that maintains stability, and as $\lambda \rightarrow 1$ the queue lengths build up so using an idle slow server when one is available ($p_S = 1$) also should be optimal.

\begin{table*}[t]
	\centering
	\begin{tabular}{|l|lllllll|}
		\hline
		& \multicolumn{7}{c||}{JIQ-(2,2)} \\ 
		$\lambda$ & $p_F^{*}$ & $p_S^{*}$ & $\E{T_{\mathrm{opt}}}$ & $p_F^{\mathrm{heur}}$ & $p_S^{\mathrm{heur}}$ & $\E{T_{\mathrm{heur}}}$ & $\%$ error \\ \hline 
		0.14 & any & 0 & 0.384 & any & 0 & 0.384 & 0 \\  \hline 
		0.24 & any & 0 & 0.443 & any & 0 & 0.443 & 0 \\ \hline 
		0.34 & 0.999 & 0.018 & 0.575 & any & 0 & 0.576 & 0.023 \\ \hline 
		0.44 & 1 & 0.426 & 0.742 & 1 & 0.444 & 0.743 & 0.014 \\ \hline 
		0.54 & 1 & 0.723 & 0.868 & 1 & 1 & 0.879 & 1.196 \\ \hline 
		0.64 & 1 & 1 & 0.967 & 1 & 1 & 0.967 & 0 \\ \hline
		0.74 & 1 & 1 & 1.101 & 1 & 1 & 1.101 & 0 \\ \hline
		0.84 & 0.877 & 1 & 1.547 & 1 & 1 & 1.605 & 3.732 \\ \hline 
		0.90 & 0.714 & 1 & 2.331 & 0.555 & 1 & 2.908 & 24.754 \\ \hline 
		0.98 & 0.579 & 1 & 10.677 & 0.555 & 1 & 12.837 & 20.231 \\ \hline 
		& \multicolumn{7}{c|}{JSQ-(2,2)} \\
		$\lambda$ & $p_F^{*}$ & $p_S^{*}$ & $\E{T_{\mathrm{opt}}}$ & $p_F^{\mathrm{heur}}$ & $p_S^{\mathrm{heur}}$ & $\E{T_{\mathrm{heur}}}$ & $\%$ error \\ \hline 
		0.14 & any & 0 & 0.383 & any & 0 & 0.383 & 0 \\  \hline 
		0.24 & any & 0 & 0.429 & any & 0 & 0.429 & 0 \\ \hline 
		0.34 & any & 0 & 0.514 & any & 0 & 0.514 & 0 \\ \hline 
		0.44 & 1 & 0.103 & 0.677 & any & 0 & 0.689 & 1.693 \\ \hline 
		0.54 & 1 & 0.405 & 0.832 & 1 & 0.444 & 0.833 & 0.066 \\ \hline 
		0.64 & 1 & 0.722 & 0.946 & 1 & 1 & 0.954 & 0.762 \\ \hline
		0.74 & 1 & 1 & 1.039 & 1 & 1 & 1.039 & 0 \\ \hline
		0.84 & 1 & 1 & 1.217 & 1 & 1 & 1.217 & 0 \\ \hline 
		0.90 & 0.839 & 1 & 1.595 & 1 & 1 & 1.957 & 22.697 \\ \hline 
		0.98 & 0.597 & 1 & 3.243 & 0.555 & 1 & 3.659 & 12.804 \\ \hline 
	\end{tabular}
	\caption{Comparison of optimal $p_F$ and $p_S$ to best heuristic under JIQ-(2,2) (top) and JSQ-(2,2) (bottom). Here $q_F = 0.2$ and $r = 5$. The columns $p_F^{*}$ and $p_S^{*}$ give the optimal values of $p_F$ and $p_S$, while $p_F^{\mathrm{heur}}$ and $p_S^{\mathrm{heur}}$ are the values chosen by the heuristic.}
	\label{tab:heuristics_jiq}
\end{table*}

\section{Conclusion}
\label{sec:conclusion}

This paper addresses the problem of dispatching in large-scale, heterogeneous systems.
We design two new heterogeneity-aware families of policies, JIQ-($d_F$,$d_S$) and JSQ-($d_F$,$d_S$).
Our policies are simple, analytically tractable, and provide outstanding performance.

Our results yield several insights about how to design ``power of $d$'' policies that perform well in heterogeneous settings.
In order to maintain the maximum stability region, the dispatcher must ensure that fast servers are queried sufficiently often.
Alone, neither uniform sampling nor weighting querying in favor of fast servers is enough to ensure good performance.
Our work establishes that, instead, dispatching policies should use heterogeneity information at two decision points: (1) when choosing which servers to query, and (2) when choosing where among the queried servers to dispatch a job.
Ultimately, how best to distribute jobs among fast and slow servers depends jointly on the system load, the fraction of servers that are fast, and the relative speeds of the servers.
It may be best to use only fast servers, to use slow servers only when they are idle, or to balance jobs among fast and slow servers in some other way.
Because there is no single right answer, policies designed for heterogeneous systems must be able to adapt to the system parameters.
JIQ-($d_F$,$d_S$) and JSQ-($d_F$,$d_S$) do this by optimizing over the probabilistic parameters to choose the best allocation of jobs to fast and slow servers.
Moreover, as we show in Theorem~\ref{thm:stability}, the optimal policy in each family is guaranteed to be stable.

We focus specifically on policies that query fixed numbers of fast and slow servers and then make probabilistic decisions about how to route among the queried servers based on idleness and queue length information.  
The space of policies that use heterogeneity information at both decision points is much larger than the policies we propose here.
For example, one could imagine generalizing our policies at the first decision point by choosing $d_F$ and $d_S$ probabilistically for each query; this also would allow us to adapt our policies for systems with more than two server speeds.
At the second decision point, one could combine ($d_F,d_S$)-style querying with a heterogeneity-aware dispatching policy, such as SED.
While optimizing over such a large policy space is likely to be challenging, we are optimistic that substantial advances could be made in future work toward understanding a wider scope of policies and settings.

Differing server speeds is just one way in which server farms may exhibit heterogeneity.
Systems may also consist of servers that are heterogeneous in their memory, network bandwidth, or any other resource availability.
Some jobs may be able to run on certain servers but not on others, for example due to data locality.
Jobs may be capable of running on any server, but may have a preference for or run faster on certain servers.
The policies we present in this paper are designed to perform well specifically for the case of heterogeneous server speeds, but we believe the insights gained will aid the design of effective load balancing policies for the broad range of heterogeneity that exists in today's systems.

\bibliographystyle{abbrv}
\bibliography{references}

\newpage

\section*{Appendix}
\label{sec:appendix}

Here we give the complete expanded form of the optimization formulations given in (\ref{opt:jiq}, \ref{opt:jsq}).  

For JIQ-($d_F, d_S$) our optimization formulation (\ref{opt:jiq}) is as follows:
\begin{align*}
& \underset{p_F, p_S}{\text{minimize}} 
& &\ \ \ \frac{q_F \lambda_{IF}\mu_F}{\lambda (\mu_F - \lambda_{BF})(\mu_F - \lambda_{BF} + \lambda_{IF})}\\
& & &+ \frac{q_S \lambda_{IS}\mu_S}{\lambda (\mu_S - \lambda_{BS})(\mu_S - \lambda_{BS} + \lambda_{IS})}\\
& \text{subject to} 
& & \lambda_{IF} = \frac{\lambda d_F}{q_F} \left( \sum_{i=0}^{d_F-1} \binom{d_F-1}{i} \frac{\pi_{0F}^{i} (1 - \pi_{0F})^{d_F-1-i}}{i+1}  \right)\\
& & & \lambda_{BF} = \frac{\lambda d_F}{q_F} \frac{(1-\pi_{0F})^{d_F-1}}{d_F} \\
& & &\quad\cdot \left( \left( 1 - (1 - \pi_{0S})^{d_S} \right) (1-p_S) + (1 - \pi_{0S})^{d_S} p_F  \right)\\
& & & \lambda_{IS} = \frac{\lambda d_S}{q_S} (1-\pi_{0F})^{d_F} \\
& & &\quad\cdot \left( \sum_{i=0}^{d_S-1} \binom{d_S - 1}{i} \frac{\pi_{0S}^{i}(1 - \pi_{0S})^{d_S-1-i}}{i+1} p_S \right)\\
& & & \lambda_{BS} = \frac{\lambda d_S}{q_S} (1-\pi_{0F})^{d_F} \frac{(1 - \pi_{0S})^{d_S-1}}{d_S} (1-p_F)\\
& & & \pi_{0F} = \frac{\mu_F - \lambda_{BF}}{\mu_F - \lambda_{BF} + \lambda_{IF}}\\
& & & \pi_{0S} = \frac{\mu_S - \lambda_{BS}}{\mu_S - \lambda_{BS} + \lambda_{IS}}\\
& & & 0 \leq \pi_{0F}, \pi_{0S} \leq 1 \\
& & & 0 \leq p_F, p_S \leq 1
\end{align*}

For JSQ-($d_F, d_S$) our optimization formulation (\ref{opt:jsq}) is as follows:
\begin{align*}
& \underset{p_F, p_S}{\text{minimize}} 
& & \frac{1}{\mu_F} \cdot \left(1-\rho_F^{d_F}\right) + \frac{1}{\mu_S} \cdot \rho_F^{d_F}\left(1-\rho_S^{d_s}\right)p_S \\
& & &\quad + \frac{1}{\mu_F} \sum_{i=1}^{\infty} \left(i+1\right) \cdot \frac{f_i^{d_F} - f_{i+1}^{d_F}}{f_1^{d_F}} \\ 
& & & \quad\quad\cdot\rho_F^{d_F}(\rho_S^{d_S} p_F + \left(1-\rho_S^{d_s}\right)\left(1-p_S)\right) \\
& & &\quad + \frac{1}{\mu_S} \sum_{i=1}^{\infty} \left(i+1\right) \cdot \frac{s_i^{d_S} - s_{i+1}^{d_S}}{s_1^{d_S}} \cdot \rho_F^{d_F} \rho_S^{d_S}\left(1-p_F\right)\\
& \text{subject to} 
& & \rho_F = \frac{\lambda}{\mu_F q_F} \left( \rho_F^{d_F}\left(1- \rho_S^{d_S}\right)\left(1-p_S\right) \right)\\
& & &\quad\quad+ \frac{\lambda}{\mu_F q_F}\left( \left(1-\rho_F^{d_F}\right) + \rho_F^{d_F} \rho_S^{d_S} p_F \right)\\
& & & \rho_S = \frac{\lambda}{\mu_S q_S} \left( \rho_F^{d_F} \left(1-\rho_S^{d_S}\right)p_S + \rho_F^{d_F} \rho_S^{d_S}\left(1-p_F\right) \right)\\
& & & \frac{\lambda}{q_F} \left(f_{i-1}^{d_F} - f_i^{d_F}\right) \left( \left(1 - \rho_S^{d_S}\right)\left(1 - p_S\right) + \rho_S^{d_S} p_F \right) \\
& & &\quad\quad= \mu_F \left(f_i - f_{i+1}\right)  \qquad \qquad \qquad \qquad \qquad \ \, i \ge 1\\
& & & \frac{\lambda}{q_S} \left(s_{i-1}^{d_S} - s_i^{d_S}\right) \rho_F^{d_F} \left(1 - p_F\right) = \mu_S \left(s_i - s_{i+1}\right) \quad i \ge 1\\
& & & f_0 = s_0 = 1 \\
& & & f_1 = \rho_F \\
& & & s_1 = \rho_S \\
& & & 0 \leq \rho_F, \rho_S \leq 1 \\
& & & 0 \leq p_F, p_S \leq 1
\end{align*}

\end{document}